\definecolor{Strawberry}{HTML}{FB2943}
\definecolor{Raspberry}{HTML}{E30B5D}
\definecolor{DarkStrawberry}{HTML}{B01D2F}
\theoremstyle{plain}
\newtheorem{theorem}{Theorem}[section]
\newtheorem{proposition}[theorem]{Proposition}
\newtheorem{definition}[theorem]{Definition}
\newtheorem{corollary}[theorem]{Corollary}
\theoremstyle{definition}
\theoremstyle{remark}
\newtheorem{remark}[theorem]{Remark}
\newtheorem{observation}[theorem]{Observation}
\newtheorem{question}[theorem]{Question}
\newcommand{\R}{\mathbb{R}}
\newcommand{\N}{\mathbb{N}}
\newcommand{\Z}{\mathbb{Z}}
\newcommand{\C}{\mathbb{C}}
  \newcommand{\cC}{\mathcal{C}}
  \newcommand{\cI}{\mathcal{I}}
\newcommand{\cM}{\mathcal{M}}  
\newcommand{\cP}{\mathcal{P}}
 \newcommand{\cZ}{\mathcal{Z}}
\newcommand{\cMn}{\mathcal{M}^{(n)}}
\newcommand{\cPn}{\mathcal{P}^{(n)}}
\newcommand{\cCn}{\mathcal{C}^{(n)}}
\newcommand{\cMG}{\mathcal{M}^{(\Gamma)}}
\newcommand{\cPG}{\mathcal{P}^{(\Gamma)}}
\newcommand{\cCG}{\mathcal{C}^{(\Gamma)}}
\NewDocumentCommand{\Ms}{o}{
  \mathcal{M}_s%
  \IfValueT{#1}{^{(#1)}}
}
\NewDocumentCommand{\Ps}{o}{
  \mathcal{P}_s%
  \IfValueT{#1}{^{(#1)}}
}
\DeclareMathOperator{\Tr}{Tr}
\DeclareMathOperator{\Aut}{Aut}
\newcommand{\Her}[2][\C]{\operatorname{Her}_{#2}(#1)}
\newcommand{\Mat}[2]{\operatorname{Mat}_{#2}\!\left(#1\right)}
\newcommand{\Sym}[1]{\mathrm{Sym}_{#1}(\mathbb{R})}
\NewDocumentCommand{\Comm}{o}{%
  \mathrm{Comm}%
  \IfValueT{#1}{_{\!#1}}%
}
\newcounter{remlist_counter}
\newcounter{proplist_counter}
\newlength{\maxlabelwidth}
\newcommand{\inst}[1]{\(^\textrm{#1}\) }
\title{Graph Quantum Magic Squares and Free Spectrahedra}
\author{Francesca La Piana\inst{1}}
\date{
\parbox[t]{0.9\textwidth}{\footnotesize{%
\begin{itemize}
\item[1] Department of Mathematics, University of Oslo, P.O.\ Box 1053, 0316 Blindern, Oslo (Norway), e-mail: franla@math.uio.no
\end{itemize}
}}
\\
\vspace{\baselineskip}
\today}
\begin{document}

\maketitle

\begin{abstract}

Recently De les Coves, Drescher and Netzer showed that an analogue of the Birkhoff--von Neumann theorem fails in the quantum setting \cite{DDN20}. Motivated by this and 
questions arising in the study of quantum automorphisms of graphs, we
introduce a graph-based variant of quantum magic squares and show that the
analogue already fails for the cycle \(C_4\), via an explicit
counterexample. We also show that they admit monic linear matrix inequality descriptions, hence form compact
free spectrahedra.
\end{abstract}

{
\hypersetup{linkcolor=Raspberry}
\tableofcontents
}

\tableofcontents

\section{Introduction}

A \emph{magic square} is an \(n \times n\) matrix with nonnegative entries such that the sum of the elements in each row and in each column is the same; this common value is called the \emph{magic constant}. 
When the magic constant is equal to one, the matrix is typically referred to as a \emph{doubly stochastic matrix} (or bistochastic matrix).

The set of all \(n \times n\) doubly stochastic matrices forms a convex polytope known as the \emph{Birkhoff polytope}. 
Its vertices are exactly the permutation matrices, namely those matrices in which every row and every column contains exactly one entry equal to \(1\) and all other entries are \(0\).
This geometric structure is formalized by the classical \emph{Birkhoff--von Neumann theorem}, which states that every doubly stochastic matrix can be written as a convex combination of permutation matrices.

This classical setting generalizes to the non-commutative case, where the entries of the matrix are no longer numbers but positive semidefinite elements of an algebra or, more commonly, of a \(C^*\)-algebra.

\medskip

In their 2020 paper, De les Coves, Drescher, and Netzer studied \emph{quantum magic squares}, whose entries are complex square matrices, classifying them according to the properties of their entries.

The main question they raise is whether the classical Birkhoff--von Neumann theorem continues to hold in this noncommutative setting.
\begin{question}
Does the matrix convex hull of quantum permutation matrices, \(\cPn\), coincide with the full set of quantum magic squares, \(\cMn\), i.e.
\[
\operatorname{mconv}\bigl(\mathcal{P}^{(n)}\bigr) = \mathcal{M}^{(n)}?
\]
\end{question}
To address this, they adopt the framework of \emph{matrix convexity}, where the classical notion of extreme points is naturally replaced by that of \emph{Arveson extreme points}.  
Their main result establishes that the Birkhoff–von Neumann theorem fails in the quantum setting:

\begin{theorem}\label{thm:DDN}
For every \( n \ge 3 \),
\[
\mathcal{M}^{(n)} \neq \mathrm{mconv}(\mathcal{P}^{(n)}),
\]
that is, the matrix convex hull of quantum permutation matrices does not cover 
the full set of quantum magic squares.
\end{theorem}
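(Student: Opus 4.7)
The natural approach is to invoke the matrix Krein--Milman theorem for compact matrix convex sets, which says that any such set equals the matrix convex hull of its \emph{Arveson extreme points}. Since $\cMn$ is compact and matrix convex, in order to prove $\cMn \neq \mathrm{mconv}(\cPn)$ it suffices to exhibit a single Arveson extreme point of $\cMn$ that is not itself a quantum permutation matrix.

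I would first reduce to the base case $n=3$. Given a point $B \in \mathcal{M}^{(3)}$ of matrix size $d$ that is Arveson extreme but not a quantum permutation, one builds a counterexample in $\cMn$ for arbitrary $n \geq 3$ by placing $B$ in the upper-left $3 \times 3$ corner of an $n \times n$ quantum magic square and completing it on the complementary block via a fixed classical permutation tensored with $I_d$. The rigidity of the classical block transports Arveson extremality from $B$ to the extension, while non-membership in $\cPn$ is inherited from $B \notin \mathcal{P}^{(3)}$.

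The real content lies in the construction of $B$. Using the fact that $\mathcal{M}^{(3)}$ is a compact free spectrahedron, cut out by a monic linear matrix inequality $L(X) = I - \sum_{i,j} L_{ij} \otimes X_{ij} \succeq 0$, Arveson extremality of $B$ translates into a concrete kernel condition: no nonzero Hermitian perturbation $Y = (Y_{ij})$ obeying the linear row and column sum constraints can be lifted compatibly with $\ker L(B)$, in the sense of the standard kernel criterion for extreme points of free spectrahedra. A promising candidate for $B$ is a tuple of rank-one positive matrices placed on a doubly stochastic $3 \times 3$ pattern, whose ranges are chosen generically enough to force $Y = 0$, yet at least one of whose entries is not a projection.

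The main obstacle is precisely this construction: the entries of $B$ must simultaneously be rigid enough (i.e.\ their supports generic enough) to rule out any nonzero perturbation compatible with $\ker L(B)$, and yet fail the orthogonality/idempotency relations defining a quantum permutation matrix. Once such a $B$ is produced and its Arveson extremality is verified by an explicit kernel computation, the conclusion $B \notin \mathrm{mconv}(\mathcal{P}^{(3)})$ is automatic, and the block-embedding argument upgrades it to $\cMn \neq \mathrm{mconv}(\cPn)$ for every $n \geq 3$.
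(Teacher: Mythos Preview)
Your high-level logic is correct: if $B$ is Arveson extreme in $\cMn$ and $B\notin\cPn$, then $B\notin\mathrm{mconv}(\cPn)$, because any expression $B=V^*QV$ with $Q\in\cPn$ gives a dilation of $B$ inside $\cMn$, which must split as $Q\cong B\oplus C$, forcing each $B_{ij}$ to be a direct summand of a projection and hence a projection. So the strategy is sound in principle, and it is genuinely different from the route taken in \cite{DDN20} (which is what the paper invokes here). There, the separation is obtained not via Arveson extremality at all, but through the explicit certificate of Proposition~\ref{prop:DDN-separation}: one writes down concrete maps $\varphi,\psi$ and a linear subspace $\mathcal{S}^{(n)}$, shows that membership in $\mathrm{mconv}(\cPn)$ forces $\varphi(A)+\psi(A)+X\succeq0$ for some $X\in\mathcal{S}^{(n)}$, and then exhibits (via a dual SDP) a specific $A\in\cMn$ for which no such $X$ exists. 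In fact, in \cite{DDN20} the existence of non-permutation Arveson extreme points is presented as a \emph{consequence} of the theorem, not as its proof; you are proposing to reverse that implication.

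That said, there are two genuine gaps in your outline. First, the entire argument rests on actually producing an Arveson extreme $B\in\cM^{(3)}$ with $B\notin\cP^{(3)}$, and you only gesture at ``rank-one blocks in generic position'' without verifying the kernel condition. This is exactly the hard, non-formal part, and nothing in your sketch guarantees it goes through; the separation argument in \cite{DDN20} bypasses this entirely. Second, your reduction from $n=3$ to general $n$ is not as automatic as you claim. In a dilation $\hat B\in\cM^{(n)}_{d+\ell}$ of $\tilde B=B\oplus(\text{permutation})$, the rigidity of the permutation block does kill the off-diagonal $\beta_{ij}$ whenever $i\ge4$ or $j\ge4$, but the $3\times3$ corner $(\hat B_{ij})_{i,j\le3}$ need \emph{not} lie in $\cM^{(3)}_{d+\ell}$: its row sums in the $\gamma$-block are $I_\ell-\sum_{j\ge4}\gamma_{ij}$, and nothing forces those tails to vanish. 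So you cannot directly invoke Arveson extremality of $B$ in $\cM^{(3)}$ to conclude $\beta_{ij}=0$ for $i,j\le3$. This step needs a genuine additional argument (or you should instead construct the extreme point directly for each $n$, which is arguably easier since there are more degrees of freedom). The approach in \cite{DDN20} avoids this issue altogether, since the separation criterion is formulated uniformly for every $n\ge3$.
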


They also observe that the set of quantum magic squares is a free spectrahedron, and thus
\(\mathcal{M}^{(n)}\) coincides with the matrix convex hull of its Arveson extreme points. 
Moreover, every quantum permutation matrix is an Arveson extreme point.
However, in view of their main theorem, it follows that not every Arveson extreme point is a quantum permutation matrix.

\medskip

Their work provides the main foundation for this paper, which we view as a continuation of their approach. Motivated by these results, we introduce a  variant of quantum magic squares subject to constraints imposed by a graph \(\Gamma\), where the underlying combinatorial structure is encoded by the adjacency matrix of \(\Gamma\).
We call these objects \emph{graph quantum magic squares} (GQMS).

In analogy with \cite{DDN20}, we may ask for the graph-analogue 
of the quantum Birkhoff--von Neumann question. Recall that 
\[
\cM^{(\Gamma)} := \{\,A \in \cM^{(n)} \mid 
A (I_s \otimes A_\Gamma) = (I_s \otimes A_\Gamma) A \,\}
\]
is the set of quantum magic squares that commute with the adjacency matrix 
\(A_\Gamma\) of the graph \(\Gamma\) (for precise definitions see Definition~\ref{def:GQMS}). 

\begin{question}
Given a graph \(\Gamma\), does the matrix convex hull of its quantum permutation 
matrices \(\cPG\) generate all graph quantum magic squares? That is,
\[
\mathrm{mconv}(\cP^{(\Gamma)}) = \cM^{(\Gamma)} \; ?
\]
\end{question}

This notion allows us to extend the framework of \cite{DDN20} to a setting where the magic relations are combined with commutation constraints imposed by the graph.

\medskip

The paper is organized as follows.
In Section~2, we introduce the notation and some background.
We briefly recall the definition of classical magic squares and then review the setting used by De~las~Cuevas, Drescher, and Netzer to define quantum magic squares, together with their classification according to the properties of the entries.
We also summarize the main tools employed in their 2020 work to prove the failure of the Birkhoff--von Neumann theorem in the quantum case.
Finally, we recall the definition of free spectrahedra following Evert and Helton’s framework~\cite{EH19}, reformulating it in terms of linear matrix inequalities (LMI).

In Section~3, we introduce the concept of \emph{graph quantum magic squares}, 
which can be viewed as a variant of quantum magic squares where additional constraints 
are imposed by requiring the matrix to commute with the adjacency matrix of a given graph.

Following the same structure as in~\cite{DDN20}, we classify graph quantum 
magic squares according to the properties of their entries and provide a counterexample 
to the Birkhoff--von Neumann theorem in the case of the cycle graph~\(C_4\), 
by constructing a matrix that belongs to~\(\mathcal{M}^{(C_4)}\) but not to 
\(\mathrm{mconv}(\mathcal{P}^{(C_4)})\).

We then show explicitly that the set of quantum magic squares defined in~\cite{DDN20} forms a free spectrahedron, and that the same holds for graph quantum magic squares, at least for \(k\)-regular graphs.

The paper concludes with an appendix, where we provide the Hermitian basis used in the numerical construction of the counterexample, and explicitly compute the dimension of the commutant for the family of cycle graphs \(C_n\).

\section{Notations and background}
In this section, we introduce some notation and basic definitions that will be used throughout the paper.
We also briefly recall a few standard concepts that will be useful in the following sections.

\begin{itemize}
    \item \(\operatorname{Mat}_n(S)\) denotes the space of \(n \times n\) matrices with entries in the set \(S\).
    \item \(\Her[\C]{n} = \{ A \in \operatorname{Mat}_n(\C) \mid A^*=A \}\) is the real vector space of Hermitian \(n \times n\) complex matrices.
    \item \(A \succeq 0\) means that the Hermitian matrix \(A\) is positive semidefinite.
    \item \(\operatorname{Psd}_n(\C)\) denotes the convex cone of all positive semidefinite \(n \times n\) matrices over \(\C\).
    \item \(\Sym d\) the set real symmetric \(d \times d\) matrices.
\end{itemize}

\subsection{Quantum Magic Squares} In the noncommutative setting, the entries of a magic square are elements of a unital \(C^*\)-algebra. A natural generalization of classical magic squares is to consider block matrices whose entries are positive elements such that the sum on each row and column equals the unit of the \(C^*\)-algebra. Throughout, we restrict to the case of \(\operatorname{Mat}_s(\C)\). 

To describe the “row sums” and “column sums’’ in this context, it is convenient to recall 
the notion of operator-valued measure.

A \emph{positive operator-valued measure} (POVM) on \(\C^s\) is a finite family of positive 
semidefinite matrices \(A_i \in \operatorname{Psd}_s(\C)\) such that 
\[
\sum_i A_i = I_s.
\]
If, in addition, each \(A_i\) is a projection (\(A_i^2 = A_i = A_i^*\)), the family is called a 
\emph{projection-valued measure} (PVM). Thus, PVMs form a subset of POVMs.

We now define the main object of our study.

\begin{definition} 
A \emph{quantum magic square} of external size \(n\) and internal size \(s\) is a block matrix 
\[
A=\begin{pmatrix}
A_{11} & \cdots & A_{1n}\\
\vdots & \ddots & \vdots\\
A_{n1} & \cdots & A_{nn}
\end{pmatrix}
\qquad \text{with } A_{ij}\in \operatorname{Psd}_s(\C),
\]
such that, for every \(i,j\),
\begin{equation}\label{eq:magic}
    \sum_{k=1}^n A_{ik} = I_s 
    \qquad \text{and} \qquad 
    \sum_{k=1}^n A_{kj} = I_s.
\end{equation}
We refer to~\eqref{eq:magic} as the \emph{magic relations}.
Equivalently, we say that each row and each column forms a POVM.
\end{definition}
Following the convention of \cite{DNV23}, we focus on the following three families of quantum magic square according to the properties of their entries. More precisely, for any \(n,s \in \N\), we define respectively the set of \emph{quantum magic square}, the set of \emph{quantum permutation matrices} and the set \emph{commuting quantum permutation matrices} in the following way:
\[ 
\mathcal{M}_s^{(n)} := \Bigl\{ A \in \operatorname{Mat}_n\bigl(\operatorname{Psd}_s(\C)\bigr) \mid \sum_{k=1}^n A_{ik} = I_s,\ \sum_{k=1}^n A_{kj} = I_s \ \forall i,j \Bigr\}, \] \[ \mathcal{P}_s^{(n)} := \Bigl\{ A \in \mathcal{M}_s^{(n)} \mid A_{ij} = A_{ij}^2 = A_{ij}^* \ \forall i,j \Bigr\},
\]
\[ 
\mathcal{C}_s^{(n)} := \Bigl\{ A \in \mathcal{P}_s^{(n)} \mid [A_{ij},A_{kl}]=0 \ \forall i,j,k,l \Bigr\}.
\]

Taking the union over all internal size \(s\), we have the total sets \(\mathcal{M}^{(n)} := \bigcup_{s\in\N} \mathcal{M}_s^{(n)}\), \(\mathcal{P}^{(n)}:= \bigcup_{s\in\N} \mathcal{P}_s^{(n)}\), \(\mathcal{C}^{(n)}:= \bigcup_{s\in\N} \mathcal{C}_s^{(n)}\), and they satisfy
\[ 
\mathcal{C}^{(n)} \subseteq \mathcal{P}^{(n)} \subseteq \mathcal{M}^{(n)} \qquad (\forall n\in\N).
\] 

\begin{observation} For \(n=1,2,3\) we have \(\mathcal{C}^{(n)} = \mathcal{P}^{(n)}\). In particular, every \(3\times 3\) quantum permutation matrix is also commuting (see, e.g., \cite{Web23}*{Lemma~2.5}). \end{observation}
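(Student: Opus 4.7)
The plan is to handle each size in turn, with $n=1,2$ being immediate and the content living in $n=3$. For $n=1$ the unique entry is $I_s$, and for $n=2$ the magic relations force $A_{11}=A_{22}$ and $A_{12}=A_{21}=I_s-A_{11}$, so every entry lies in the commutative algebra generated by the single projection $A_{11}$. For $n=3$ I would split commutativity of two entries into two subcases: entries sharing a row or column, and entries in distinct rows and columns.

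The first subcase reduces to upgrading each row/column of any $A\in\mathcal{P}_s^{(3)}$ from a POVM of projections to a full PVM. Given projections $P_1,P_2,P_3$ with $P_1+P_2+P_3=I_s$, the identity
\[
P_1P_2P_1+P_1P_3P_1=P_1(I_s-P_1)P_1=0
\]
together with positivity of both summands forces $P_1P_2P_1=P_1P_3P_1=0$, and hence $P_1P_2=P_1P_3=0$; by symmetry all pairwise products vanish. In particular, any two entries of $A$ sharing a row or a column commute, with product zero.

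For the second subcase, after a row/column relabeling (which preserves $\mathcal{P}_s^{(3)}$) I may assume the pair to be $(A_{11},A_{22})$. Substituting $A_{22}=I_s-A_{12}-A_{32}$ (column~2) and then $A_{32}=I_s-A_{31}-A_{33}$ (row~3), and using the orthogonality $A_{11}A_{12}=A_{11}A_{31}=0$ from the PVM step, a short computation yields the triangle identity $A_{11}A_{22}=A_{11}A_{33}$. The analogous expansions give $A_{22}A_{11}=A_{22}A_{33}$ and $A_{33}A_{22}=A_{33}A_{11}$. Taking adjoints of these three identities (the entries being Hermitian) and chaining them together collapses $A_{11}A_{22}$, $A_{22}A_{11}$, $A_{22}A_{33}$, and $A_{33}A_{22}$ into a single element, so in particular $A_{11}A_{22}=A_{22}A_{11}$. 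The main obstacle is exactly this last step: positivity hands us the row and column orthogonality essentially for free, but the cross-commutators $[A_{ij},A_{kl}]$ with $i\ne k$ and $j\ne l$ are not killed by any single magic relation, and one has to weave several of them together through the triangle identities in order to force self-adjointness of $A_{11}A_{22}$ and thus commutativity.
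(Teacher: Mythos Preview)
Your proof is correct. The paper does not give its own argument for this observation; it simply records the fact and points to \cite[Lemma~2.5]{WB23}. Your write-up supplies the standard self-contained proof: the orthogonality of projections summing to the identity handles same-row/same-column pairs, and the ``triangle identities'' $A_{11}A_{22}=A_{11}A_{33}$, $A_{22}A_{11}=A_{22}A_{33}$, $A_{33}A_{22}=A_{33}A_{11}$ together with their adjoints collapse the off-diagonal case. This is exactly the argument one finds in the cited reference, so you have effectively reproduced the proof the paper outsources.
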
 
As in \cite{DDN20} and \cite{DNV23} we use the
framework of \emph{matrix convexity}, a dimension-free notion of
convexity, which generalizes classical convexity. This framework plays a central role in the study of noncommutative analogues of the
Birkhoff--von Neumann theorem.

\begin{definition}[Matrix convexity] 
Let \(\{R_s\}_{s\in\N}\) with \(R_s \subseteq \operatorname{Mat}_n\bigl(\Her{s}\bigr)\), and set \(R:=\bigcup_{s\in\N} R_s\). We say that \(R\) is matrix convex if, for any \(s,t\in\N\), any \(A^{(1)},\dots,A^{(k)}\in R_{s_i}\), and any \(V_1,\dots,V_k\in \operatorname{Mat}_{s_i,t}(\C)\) with \[ \sum_{i=1}^k V_i^* V_i = I_t, \] one has \[ \sum_{i=1}^k V_i^*\, A^{(i)} \, V_i \ \in\ R_t. \] Here the conjugation acts blockwise, i.e., \(\bigl(\sum_i V_i^* A^{(i)} V_i\bigr)_{kl} = \sum_i V_i^* A^{(i)}_{kl} V_i\). Equivalently, we say that \(R\) is closed under matrix-valued convex combinations. 
\end{definition} 

For a subset \(S\subseteq \operatorname{Mat}_n(\Her{s})\), the matrix convex hull of \(S\), denoted as \(\operatorname{mconv}(S)\), is the smallest matrix convex set containing \(S\). 
\begin{remark}
If the  \(V_i\) are scalar matrices, i.e. \(V_i = \lambda_i I_s\) with \(\lambda_i\ge0\) and \(\sum_i \lambda_i = 1\), then the definition recovers the usual convex combinations:
\[
A = \sum_{i=1}^k \lambda_i A^{(i)}.
\]
\end{remark}
The main result of \cite{DDN20} shows that the classical
Birkhoff--von Neumann theorem fails already in the smallest
non-commutative setting.

The key tool used for proving this failure is the following criterion:
\begin{proposition}[{\cite{DDN20}*{Proposition~18}}]\label{prop:DDN-separation}
Let \(A\in \cM^{(n)}_s\) and let
\[\begin{split}
    \mathrm{col}(A) &:= \sum_{i,j=1}^n e_i\otimes e_j \otimes A_{ij} \in \C^n\otimes\C^n\otimes \Her[\C]{s}, \\
\mathrm{diag}(A) &:= \sum_{i,j=1}^n E_{ii}\otimes E_{jj}\otimes A_{ij}\in \Mat{\C}{n}\otimes \Mat{\C}{n}\otimes \Her{s}.
\end{split}\]
Define
\[
\varphi(A) := \mathrm{diag}(A) - \mathrm{col}(A)\,\mathrm{col}(A)^*
\]
and
\[
\psi(A) := \sum_{\substack{i\neq j \\ k\neq \ell}}
E_{ij}\otimes E_{k\ell} \otimes 
\Bigl(-\alpha_n\,I_s + \beta_n A_{ik} + \beta_n A_{j\ell} + \gamma_n A_{i\ell} +\gamma_n A_{jk}\Bigr),
\]
where
\[
\alpha_n = \frac{1}{(n-1)(n-2)},\qquad
\beta_n = \frac{n-1}{n(n-2)},\qquad
\gamma_n = \frac{1}{n(n-2)}.
\]

Let \(\cZ_e^{(n)} := \{Z\in \operatorname{Mat}_n(\C)\mid \mathrm{diag}(Z)=0,\; Z\,\mathbf{1}=0\}\) and set
\[
\mathcal{S}^{(n)} := \bigl(\cZ_e^{(n)}\otimes \cZ_e^{(n)} \otimes \operatorname{Mat}_s(\C)\bigr)_{\mathrm{her}}.
\]

If \(A \in \mathrm{mconv}(\cP^{(n)})_s\), then there exists 
\(X \in \mathcal{S}^{(n)}\) such that
\[
\varphi(A) + \psi(A) + X \succeq 0.
\]
\end{proposition}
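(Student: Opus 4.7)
My plan is to prove the statement by reducing it to a base case plus a closure property: verify the inequality for the generators $A \in \mathcal{P}^{(n)}$, and show it is preserved under matrix convex combinations.

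\emph{Closure under matrix convex combinations.} Suppose $A = \sum_m V_m^* A^{(m)} V_m$ with $\sum_m V_m^* V_m = I$, and write $\widetilde{V}_m := I_n \otimes I_n \otimes V_m$ (acting only on the third tensor factor). The affinity of $\mathrm{diag}$ and $\psi$ in the entries $A_{ij}$---together with the identity $-\alpha_n I = \sum_m V_m^*(-\alpha_n I) V_m$, which absorbs the constant piece of $\psi$---gives $\mathrm{diag}(A) = \sum_m \widetilde{V}_m^*\,\mathrm{diag}(A^{(m)})\,\widetilde{V}_m$ and $\psi(A) = \sum_m \widetilde{V}_m^*\,\psi(A^{(m)})\,\widetilde{V}_m$. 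For the quadratic piece I would write $\mathrm{col}(A) = C\,D$, where $C := [\widetilde{V}_1^*\,\mathrm{col}(A^{(1)}),\ldots,\widetilde{V}_k^*\,\mathrm{col}(A^{(k)})]$ is a block row and $D$ is the block column formed by the $V_m$; since $D^*D = I$, $D$ is an isometry, so $DD^* \preceq I$ and hence
\[
\mathrm{col}(A)\,\mathrm{col}(A)^* \ =\ CDD^*C^* \ \preceq\ CC^* \ =\ \sum_m \widetilde{V}_m^*\,\mathrm{col}(A^{(m)})\,\mathrm{col}(A^{(m)})^*\,\widetilde{V}_m,
\]
which yields $\varphi(A) \succeq \sum_m \widetilde{V}_m^*\,\varphi(A^{(m)})\,\widetilde{V}_m$. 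Since $\mathcal{S}^{(n)}$ is invariant under conjugation by $\widetilde{V}_m$ (the $\mathcal{Z}_e^{(n)}$ conditions only involve the first two tensor factors), if each $A^{(m)}$ admits a witness $X_m \in \mathcal{S}^{(n)}$, then $X := \sum_m \widetilde{V}_m^* X_m \widetilde{V}_m$ also lies in $\mathcal{S}^{(n)}$, and $\varphi(A)+\psi(A)+X$ splits as the sum of two manifestly PSD pieces.

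\emph{Base case.} For $A \in \mathcal{P}^{(n)}_s$, the PVM property of each row and column gives the orthogonalities $A_{ij}A_{il}=0$ ($l\neq j$), $A_{ij}A_{kj}=0$ ($k\neq i$), and $A_{ij}^2=A_{ij}$. Expanding $\mathrm{col}(A)\,\mathrm{col}(A)^* = \sum_{i,j,k,l} E_{ik}\otimes E_{jl}\otimes A_{ij}A_{kl}$, the contributions with $i=k$ or $j=l$ exactly cancel $\mathrm{diag}(A)$, leaving
\[
\varphi(A) \ =\ -\sum_{i\neq k,\ j\neq l} E_{ik}\otimes E_{jl}\otimes A_{ij}A_{kl}.
\]
The task then reduces to producing $X\in \mathcal{S}^{(n)}$ such that the block expression $-A_{ij}A_{kl} + \beta_n(A_{ij}+A_{kl}) + \gamma_n(A_{il}+A_{kj}) - \alpha_n I_s$, together with the corresponding block of $X$, assembles into a PSD operator.

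\emph{Main obstacle.} The heart of the argument is constructing this $X$ in the base case. The approach I would adopt is to exhibit an explicit sum-of-squares certificate: operators $T_\sigma$, linear in the entries of $A$ and indexed by a suitable family (e.g.\ choices of row/column pairs), for which $\sum_\sigma T_\sigma^* T_\sigma$ coincides with $\varphi(A) + \psi(A) + X$ for some $X \in \mathcal{S}^{(n)}$. The magic relations $\sum_m A_{im} = I_s = \sum_m A_{mj}$ allow one to rewrite $I_s$ as a row or column sum and thereby re-express $-\alpha_n I_s$ in terms of the $A_{ij}$, producing correction terms whose row- and column-sums on the first two tensor factors vanish---that is, lying in $\mathcal{S}^{(n)}$. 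The specific values $\alpha_n = 1/((n-1)(n-2))$, $\beta_n = (n-1)/(n(n-2))$, $\gamma_n = 1/(n(n-2))$ should be precisely those for which such an identity exists modulo $\mathcal{S}^{(n)}$; verifying their admissibility reduces to a finite-dimensional linear algebra computation, matching coefficients of the quadratic monomials $A_{pq}A_{rs}$ on both sides subject to the $\mathcal{S}^{(n)}$-correction freedom, and this comparison is the main (if routine) calculation.
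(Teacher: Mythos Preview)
The paper does not prove this proposition at all: it is quoted as \cite[Proposition~18]{DDN20} and used as a black-box separation criterion in the proof of Theorem~\ref{thm:C4-counterexample}. There is therefore no in-paper argument to compare your plan against.

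On the merits of your plan itself: the closure step is correct and cleanly argued. Writing $\mathrm{col}(A)=CD$ with $D^*D=I$ and using $DD^*\preceq I$ to get $\mathrm{col}(A)\,\mathrm{col}(A)^*\preceq CC^*$ is exactly the right operator-concavity mechanism for pushing the quadratic term through a matrix convex combination, and the observation that $\mathcal{S}^{(n)}$ is stable under conjugation on the third tensor factor is what allows the witnesses $X_m$ to be transported. Your PVM reduction in the base case, collapsing $\varphi(A)$ to $-\sum_{i\neq j,\,k\neq\ell}E_{ij}\otimes E_{k\ell}\otimes A_{ik}A_{j\ell}$, is also correct.

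The gap is the base case itself. You describe the \emph{shape} of a sum-of-squares certificate but do not exhibit any operators $T_\sigma$, nor do you verify that the specific constants $\alpha_n,\beta_n,\gamma_n$ are the ones for which a correction $X\in\mathcal{S}^{(n)}$ exists making the block matrix PSD. This is the entire content of the result in \cite{DDN20}: those constants are not incidental, and the algebraic identity for projections under the magic relations that produces the certificate is the substantive computation. Calling it ``routine'' linear algebra is optimistic; without it, you have established the easy half (stability under $\mathrm{mconv}$) and left the hard half (the actual certificate on $\mathcal{P}^{(n)}$) as a promissory note.
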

We recall the main result of \cite{DDN20}.

\begin{theorem}[{\cite{DDN20}*{Theorem~16}}]
For every \(n \ge 3\) we have
\[
\operatorname{mconv}(\cP^{(n)}) \subsetneq \cM^{(n)}.
\]
In particular, the equality \(\cP^{(3)} = \cC^{(3)}\) holds, and still
\[
\operatorname{mconv}(\cP^{(3)}) \subsetneq \cM^{(3)}.
\]
\end{theorem}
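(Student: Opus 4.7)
I would deduce this from Proposition~\ref{prop:DDN-separation} in its contrapositive form: to establish the strict inclusion $\mathrm{mconv}(\cP^{(n)}) \subsetneq \cM^{(n)}$, it suffices to exhibit a quantum magic square $A \in \cM^{(n)}_s$ for which the linear matrix inequality ``there exists $X \in \cS^{(n)}$ with $\varphi(A) + \psi(A) + X \succeq 0$'' is infeasible. By semidefinite duality, infeasibility is certified by a Hermitian $Y \succeq 0$ orthogonal to $\cS^{(n)}$ with $\langle Y,\, \varphi(A) + \psi(A) \rangle < 0$. The whole question is thereby reduced to a finite-dimensional SDP feasibility problem for a single concrete $A$.

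I would handle the base case $n = 3$ first, at the smallest nontrivial internal dimension $s = 2$. My candidate $A \in \cM^{(3)}_2$ would be a highly symmetric quantum magic square, for instance one whose rows and columns are POVMs built out of Pauli matrices or mutually unbiased bases on $\C^2$; such POVMs are ``as noncommutative as possible'' among three-outcome POVMs and are therefore the natural place to look for a point outside $\mathrm{mconv}(\cP^{(3)})$. With $A$ fixed, I would write $\varphi(A) + \psi(A)$ in an adapted basis, exploit the symmetry to block-diagonalise the ambient SDP, and construct an explicit dual witness $Y$---ideally a rank-one $Y = v v^*$ with $v \in \C^3 \otimes \C^3 \otimes \C^2$ forced into the annihilator $(\cZ_e^{(3)} \otimes \cZ_e^{(3)} \otimes \Mat{\C}{2})^\perp$, so that the pairing $\langle v v^*,\, \varphi(A)+\psi(A)\rangle$ collapses to a scalar inequality that can be checked exactly.

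For $n \ge 4$ I would use a padding argument: embed the counterexample $A_0 \in \cM^{(3)}_s$ as the top-left block of $A \in \cM^{(n)}_s$, completing it with a classical identity doubly-stochastic block in the bottom-right corner. The contributions from this identity corner make only trivial, easily controllable contributions to $\varphi(A) + \psi(A)$, so that any feasible $X \in \cS^{(n)}$ for the padded square would, after restriction to the $\cZ_e^{(3)}\otimes\cZ_e^{(3)}$ subblock, yield a feasible $X_0 \in \cS^{(3)}$ for $A_0$, contradicting the base case. The final assertion, that the separation persists even though $\cP^{(3)} = \cC^{(3)}$, is then immediate from the observation following the definition of $\cC^{(n)}$: commutativity of the entries of size-three quantum permutation matrices does not prevent their matrix convex hull from being a proper subset of $\cM^{(3)}$, because matrix-convex combinations with non-scalar coefficients $V_i$ still generically produce noncommuting entries.

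The main obstacle is unambiguously the first step: producing a rigorous infeasibility certificate at $n = 3$. Done by hand, this requires enough symmetry that the dual SDP collapses to a scalar or almost-scalar inequality; otherwise one is forced into a numerical SDP whose output must then be exactly rationalised, which is delicate given the mixed linear/rank-one dependence of $\varphi(A) + \psi(A)$ on $A$ (linear in $\psi$, quadratic via the $\mathrm{col}(A)\,\mathrm{col}(A)^*$ correction in $\varphi$). Once such a witness is in hand, the padding reduction and the statement about $\cP^{(3)} = \cC^{(3)}$ are routine.
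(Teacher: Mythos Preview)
The paper does not prove this theorem: it is quoted verbatim from \cite[Theorem~16]{DDN20} as background, with no argument supplied. So there is no ``paper's own proof'' to compare against. That said, your base-case strategy is exactly the machinery the paper deploys for its own result, Theorem~\ref{thm:C4-counterexample}: fix a concrete $A$, form $M_A=\varphi(A)+\psi(A)$, and certify infeasibility of $M_A+X\succeq 0$ over $X\in\cS^{(n)}$ via a dual witness $Y\succeq 0$ orthogonal to $\cS^{(n)}$ with $\Tr(YM_A)<0$. In that sense your plan is aligned with both \cite{DDN20} and the present paper's methodology.

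One point deserves more care than you give it. Your reduction from general $n$ to $n=3$ by block-diagonal padding is not obviously compatible with the separation criterion: the map $\psi$ carries the $n$-dependent constants $\alpha_n,\beta_n,\gamma_n$, and the sum in $\psi(A)$ runs over \emph{all} pairs $i\neq j$, $k\neq\ell$ in $\{1,\dots,n\}$, not just those supported on the embedded $3\times 3$ block. Likewise, restricting an $X\in\cS^{(n)}$ to the first three coordinates need not land in $\cS^{(3)}$, since the row-annihilation condition $Z\mathbf{1}_n=0$ does not restrict to $Z'\mathbf{1}_3=0$. So the claim that a feasible $X$ for the padded square yields a feasible $X_0$ for $A_0$ is not immediate; you would need either to track all the cross-terms explicitly (and show that the change in $\alpha_n,\beta_n,\gamma_n$ is harmless), or, more in the spirit of \cite{DDN20}, simply produce a separate counterexample $A^{(n)}\in\cM^{(n)}_2$ and dual certificate for each $n\ge 3$. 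The final remark on $\cP^{(3)}=\cC^{(3)}$ is fine.
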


 Thus, the matrix convex hull of quantum permutation matrices does not recover the full set of quantum magic squares. This motivates the study of \emph{graph quantum magic squares}, which impose additional constraints imposed by the structure of a graph. In Section~\ref{sec:free-spec} we show that this variant also admits a free-spectrahedral description.

\subsection{Free spectrahedra and linear matrix inequalities}

We now recall the notion of \emph{free spectrahedra}, which provides a description of matrix convex sets 
through \emph{linear matrix inequalities} (LMIs).   
In this section we follow the conventions of \cite{EH19}, adapted to our notation.

\medskip

A free spectrahedron is a matrix convex set that can be described by a linear matrix inequality.  

Fixing a tuple \(A = (A_1,\dots,A_g) \in ( \Sym s )^g\), we define the \emph{monic linear pencil} associated to \(A\) as 
\[
L_A(x) \;=\; I_d \;+\; A_1 x_1 \;+\; A_2 x_2 \;+\; \cdots \;+\; A_g x_g,
\]
where \(x = (x_1,\dots,x_g)\) is a tuple of noncommuting variables.  

For a tuple \(X = (X_1,\dots,X_g) \in ( \Her s )^g\) of real symmetric \(s \times s\) matrices, 
the evaluation of \(L_A\) at \(X\) is defined by
\[
L_A(X) \;=\; I_d \otimes I_s \;+\; A_1 \otimes X_1 \;+\; A_2 \otimes X_2 \;+\; \cdots \;+\; A_g \otimes X_g,
\]
where \(\otimes\) denotes the Kronecker product.  

A \emph{linear matrix inequality  } is an inequality of the form
\[
L_A(X) \;\succeq\; 0.
\]

Denoting the homogeneous linear part of the pencil by
\[
\Lambda_A(X) \;:=\; A_1 \otimes X_1 \;+\; A_2 \otimes X_2 \;+\; \cdots \;+\; A_g \otimes X_g,
\]
we can write the evaluation of \(L_A\) at \(X\) as
\[
L_A(X) \;=\; I_{ds} + \Lambda_A(X).
\]
\begin{definition}[Free spectrahedron at level \(s\)]
Given a tuple \(A \in (\Her d)^g\) and a positive integer \(s\), the \emph{free spectrahedron at level \(s\)} is
\[
\mathcal{D}_A(s) := \{\, X \in (\Her s)^g \mid L_A(X) \succeq 0 \,\}.
\]
That is, \(\mathcal{D}_A(s)\) consists of all \(g\)-tuples of \(s \times s\) Hermitian matrices \(X\)
such that the evaluation \(L_A(X)\) is positive semidefinite.
\end{definition}

\begin{definition}[Free spectrahedron]
Given a tuple \(A \in (\Her d)^g\) and a positive integer \(s\), the \emph{free spectrahedron} associated with \(A\) is defined as the graded set
\[
\mathcal{D}_A := \bigcup_{s \ge 1} \mathcal{D}_A(s).
\]
\end{definition}

We now recall the notions of dilations and Arveson extreme points,
following \cite{DDN20} and \cite{EH19}.

Given \(X \in \mathcal{D}_A(s)\), a block matrix
\[
Y =
\begin{pmatrix}
X & \beta \\
\beta^* & \gamma
\end{pmatrix}
\in \mathcal{D}_A(s+\ell),
\qquad \ell \ge 1,
\]
is called a \emph{dilation} of \(X\).
The dilation is \emph{trivial} if \(\beta = 0\), in which case \(Y\) is a direct sum of \(X\)
with another element of \(\mathcal{D}_A\).

\begin{definition}[Arveson extreme point]\label{def:arveson}
Let \(\mathcal D = \bigcup_{s \ge 1} \mathcal D(s)\) be a matrix convex set.
An element \(X \in \mathcal D(s)\) is an \emph{Arveson extreme point} of \(\mathcal D\)
if it admits no nontrivial dilation inside \(\mathcal D\).
That is, whenever
\[
Y \in \mathcal D(s+\ell)
\quad \text{and} \quad
Y =
\begin{pmatrix}
X & \beta \\
\beta^* & \gamma
\end{pmatrix},
\]
for some \(\ell \ge 1\), one must have \(\beta = 0\).
\end{definition}
\begin{remark}
In the literature (see e.g.~\cite{EH19}), an Arveson extreme point is usually
required to be \emph{irreducible}, meaning that the underlying tuple admits no
nontrivial common reducing subspace.
Without this, Definition~\ref{def:arveson} characterizes
direct sums of Arveson extreme points.
Since all extremal objects considered in this paper are irreducible, this
distinction will not play a role in what follows.
\end{remark}
\begin{remark}[Arveson extreme points and spanning]\label{rem:arveson-span}
Let \(\mathcal D\) be a compact free spectrahedron which is closed under complex conjugation.
In particular, this hypothesis holds whenever the defining pencil has real coefficient matrices (since then \(X \in \mathcal D(s)\) implies \(\overline{X}\in \mathcal D(s)\)).
This will be the case for the spectrahedra arising from quantum magic squares and from the graph ones considered later, since the corresponding coefficient matrices can be chosen real. By \cite{EH19}*{Theorem~1.3}, such a set satisfies
\[
\mathcal D
=
\operatorname{mconv}\bigl(\operatorname{AbsExt}(\mathcal D)\bigr),
\]
where \(\operatorname{AbsExt}(\mathcal D)\) denotes the set of absolute extreme points.

Since every absolute extreme point is, by definition, an Arveson extreme point,
we have
\[
\operatorname{AbsExt}(\mathcal D)
\subseteq
\operatorname{ArvesonExt}(\mathcal D).
\]
As \(\mathcal D\) is matrix convex, this implies
\[
\mathcal D
=
\operatorname{mconv}\bigl(\operatorname{ArvesonExt}(\mathcal D)\bigr).
\]
\end{remark}

\begin{theorem}[\cite{EH19}]\label{thm:EvertHelton}
Let \(\mathcal{D}\) be a compact free spectrahedron which is closed under complex conjugation.
Then
\[
\mathcal{D}
=
\operatorname{mconv}\bigl(\operatorname{ArvesonExt}(\mathcal{D})\bigr).
\]
\end{theorem}
\subsection{Quantum Automorphism of Graphs}\label{sec:QAutGamma}

To motivate the definition of Graph Quantum Magic Squares, we recall the notion of the quantum automorphism group of a finite graph, (see e.g. \cite{Ban05}), following Wang's definition of quantum permutation groups \cite{Wan95}.

In the classical setting, an automorphism of a graph \(\Gamma\)
with adjacency matrix \(A_\Gamma\)
is a permutation matrix \(P\in \mathrm{Mat}_n(\{0,1\})\)
such that \(PA_\Gamma=A_\Gamma P\). The set of all such matrices forms the automorphism  \(\mathrm{Aut(\Gamma)}\).

In the quantum setting, instead of scalars
\(\{0,1\}\), the entries
are elements of a unital
C\(^*\)-algebra. A matrix
is called a \emph{quantum permutation matrix} if its entries are projections \(p_{ij}\) such that every row and column sums to the identity.

The quantum automorphism group of \(\Gamma\), denoted by
\(\mathrm{Aut}^+(\Gamma)\), is defined via its universal
C\(^*\)-algebra \(C(\mathrm{Aut^+}(\Gamma))\). This algebra is generated by \(n^2\) elements \(u_{ij}\) subject to the relations making a quantum permutation matrix that commutes with the adjacency matrix:
    \[ U A_\Gamma = A_\Gamma U. \] 
    
For some graphs, this algebra is non-commutative, implying the existence of genuine quantum symmetries. A graph \(\Gamma\) is said to have \emph{no quantum symmetry} if the algebra
is commutative, implying \(C(\mathrm{Aut}^+(\Gamma))=C(\mathrm{Aut}(\Gamma))\).

On the contrary, if the algebra is non-commutative, the graph has genuine quantum symmetries.
This distinction is fundamental to our work. Our definition of graph quantum magic squares can be seen as a relaxation of \(C(\mathrm{Aut}^+(\Gamma))\): instead of requiring that the entries be projections (as in \(C(\mathrm{Aut}^+(\Gamma))\)), we only require that they are positive semi-definite operators (POVM), retaining the commutation condition.

In the next section, we introduce the notion of \emph{Graph Quantum Magic Squares} and provide explicit
monic linear pencils that describe both quantum magic squares
and the graph ones.

\section{Graph Quantum Magic Squares}

In this section, we introduce a graph-variant of quantum magic squares, 
obtained by imposing commutation with the adjacency matrix of a graph~\(\Gamma\).
The motivation is to explore how the combinatorial symmetries of a graph
interact with the algebraic structure of quantum magic squares.

\subsection{Definitions and preliminary observations}

Let \(\Gamma = (V,E)\) be a simple graph with \(|V| = n\), and let \(A_\Gamma\) denote
its adjacency matrix. Here, the external dimension \(n\) of the matrix corresponds to 
the number of vertices of \(\Gamma\).

\begin{definition}[Graph quantum magic square]
Let \(\Gamma\) be a graph on \(n\) vertices and \(A_\Gamma\) its adjacency matrix.
A quantum magic square \(X = (X_{ij})_{1\le i,j\le n} \in \Mat{\Her s}{n}\)
is called a \emph{graph quantum magic square} if
\[
X\,(I_s \otimes A_\Gamma)
=
(I_s \otimes A_\Gamma)\,X,
\]
i.e., if it commutes with the adjacency matrix of the graph tensored with the identity matrix of the internal dimension \(s\).

\end{definition}
The tensor \(I_s \otimes A_\Gamma\) imposes the structure of the graph on the internal entries of the matrix \(s\times s\), leaving the external indices unchanged.

This commutation condition can be written in terms of internal blocks as
\[
\sum_{k \sim j} X_{ik}
=
\sum_{k \sim i} X_{kj}
\qquad \forall\, i,j \in \{1,\dots,n\},
\]
where \(k\sim j\) means that vertices \(k\) and \(j\) are adjacent in \(\Gamma\).

Following the classification in \cite{DDN20}, we define the following sets:
\begin{definition}[Graph Quantum Magic Square]\label{def:GQMS}
Let \(\Gamma\) be a finite graph with adjacency matrix \(A_\Gamma\), and let \(s\ge1\).
We define the following sets of graph quantum magic squares:
    
\[
\begin{aligned}
\cMG_s
&:= \{\,X \in \cMn_s \mid X (I_s \otimes A_\Gamma) = (I_s \otimes A_\Gamma) X \,\},\\
\cPG_s
&:= \{\,X \in \cMG_s \mid X_{ij}^2 = X_{ij} = X_{ij}^* \text{ for all } i,j \,\},\\
\cCG_s
&:= \{\,X \in \cPG_s \mid X_{ij}X_{kl} = X_{kl}X_{ij} \text{ for all } i,j,k,l \,\}.
\end{aligned}
\]
\end{definition}
As usual we set
\[
\cMG = \bigcup_{s\in\N} \cMG_s,
\qquad
\cPG = \bigcup_{s\in\N} \cPG_s,
\qquad
\cCG = \bigcup_{s\in\N} \cCG_s,
\]
and we always have
\[
\cCG \subseteq \cPG \subseteq \cMG.
\]

Following the main question in \cite{DDN20}, we consider its graph analogue.  
In this setting, we consider the matrix convex hull of the graph quantum permutation matrices, denoted by \(
\mathrm{mconv}(\cPG)\), and we ask whether it covers the set of all graph quantum magic squares:
\[
\mathrm{mconv}(\cPG)\overset{?}{=}\cMG.
\]

As already observed in the non-graph case \cite{DDN20}, working directly with \(\cPn\) is often difficult, whereas the set \(\cCn\) is usually much easier to handle.  
For this reason, in the graph setting it is useful to understand when 
\(\cPG\) and \(\cCG\) coincide, so that we may replace 
\(\cPG\) with the set \(\cCG\).

Understanding whether \(\cPG = \cCG\) is equivalent to asking whether the graph has no quantum symmetry, i.e., whether its quantum automorphism group reduces 
to the classical one. As remarked in \autoref{sec:QAutGamma} that a graph \(\Gamma\) has no quantum symmetry if
its quantum automorphism group \(\Aut^{+}(\Gamma)=\Aut(\Gamma)\).

Thanks to the classification results in \cite{BB07} and \cite{Sch18},
many families of graphs with this property are known. For instance, the complete graphs \(K_2\) 
and \(K_3\), the cycle graphs \(C_n\) for \(n \neq 4\), and the Petersen graph all 
have no quantum symmetries, and therefore satisfy
\[
\cCG = \cPG.
\]

\subsection{A counterexample for the square graph \(C_4\)}\label{sec:counterexample}

In the general (non-graph) quantum magic square setting,
\cite{DDN20} showed that the Birkhoff--von Neumann
theorem does not hold in the quantum case.

The following result is the graph analogue of Theorem~\ref{thm:DDN}.

\begin{theorem}\label{thm:C4-counterexample}
There exists a \(C_4\)-quantum magic square 
\(B \in \cM^{(C_4)}_2\)
such that
\[
B \notin \mathrm{mconv}\!\bigl(\cP^{(C_4)}\bigr)_2.
\]
In particular,
\[
\mathrm{mconv}\!\bigl(\cP^{(C_4)}\bigr)
\subsetneq 
\cM^{(C_4)},
\]
so the Birkhoff--von Neumann theorem fails for \(C_4\) already
at internal dimension \(s=2\).
\end{theorem}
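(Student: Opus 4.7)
The plan is to construct an explicit $B \in \cM^{(C_4)}_2$ and then exclude it from the matrix convex hull of quantum permutation matrices by invoking the separation criterion of Proposition~\ref{prop:DDN-separation}. The key observation that makes this strategy available is the inclusion $\cP^{(C_4)} \subseteq \cP^{(4)}$, which gives $\mathrm{mconv}\bigl(\cP^{(C_4)}\bigr) \subseteq \mathrm{mconv}\bigl(\cP^{(4)}\bigr)$. Therefore it is enough to produce $B \in \cM^{(C_4)}_2$ such that the SDP feasibility problem from Proposition~\ref{prop:DDN-separation} (with $n=4$, $s=2$) has no solution $X \in \mathcal{S}^{(4)}$; the graph constraint only restricts the ambient set from which we draw the candidate, never the certificate we need to refute.

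First I would parametrize $\cM^{(C_4)}_2$ as an affine slice of $\operatorname{Psd}_2(\C)^{16}$. Writing each block $B_{ij} = \sum_\mu b_{ij}^\mu H_\mu$ in the Hermitian basis $\{H_\mu\}$ of $\Her[\C]{2}$ recorded in the appendix, the magic relations in \eqref{eq:magic} become linear equations on the coefficients $b_{ij}^\mu$. The commutation relation $B(I_2 \otimes A_{C_4}) = (I_2 \otimes A_{C_4}) B$ translates into the blockwise identities $\sum_{k\sim j} B_{ik} = \sum_{k\sim i} B_{kj}$, which are again linear in $b_{ij}^\mu$. This yields $\cM^{(C_4)}_2$ as the intersection of an explicit affine subspace with the product cone $\operatorname{Psd}_2(\C)^{16}$; in particular it is a nonempty compact convex body (it contains all classical doubly stochastic matrices that commute with $A_{C_4}$, tensored with $I_2$).

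Next I would perform a numerical search within this parametrization for a candidate $B$ that violates the separation criterion. Concretely, for each tentative $B$, set up the semidefinite feasibility problem
\[
\text{find } X \in \mathcal{S}^{(4)} \quad \text{such that}\quad \varphi(B) + \psi(B) + X \succeq 0,
\]
where $\varphi, \psi, \alpha_n, \beta_n, \gamma_n$ are as in Proposition~\ref{prop:DDN-separation}. Since the free variable $X$ lies in a subspace of a space of size $4\cdot 4 \cdot 4 = 64$ and the matrix to be made positive is $16 s \times 16 s = 32 \times 32$, this is an SDP of modest size that I would solve with an interior-point solver. By moving along well-chosen directions inside the affine slice (for instance starting from a classical vertex and deforming into the genuinely quantum part of $\cM^{(C_4)}_2$, which is nontrivial exactly because $C_4$ is the cycle with quantum symmetry), one can drive the dual slack to zero. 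Once infeasibility is numerically confirmed, I would extract a dual certificate, namely a positive semidefinite $Y$ in the dual cone that annihilates $\mathcal{S}^{(4)}$ and satisfies $\langle Y, \varphi(B) + \psi(B)\rangle < 0$. Rounding $B$ and $Y$ to exact rational (or algebraic) matrices and verifying by hand the inequalities $B_{ij} \succeq 0$, the magic relations, the commutation with $A_{C_4}$, and the strict negativity $\langle Y, \varphi(B) + \psi(B)\rangle < 0$ completes the proof.

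The hard part will be the middle step: finding a candidate $B$ that admits a clean, rational certificate. The feasibility problem is convex, so numerically detecting infeasibility is routine, but producing a $B$ whose coefficients and whose dual witness $Y$ round to a short exact expression — so that the strict inequality $\langle Y, \varphi(B) + \psi(B)\rangle < 0$ can be verified symbolically without relying on floating-point arithmetic — typically requires exploiting the symmetry group of $C_4$ and of $A_{C_4}$ to reduce the search space. Once such a $B$ is in hand, the final conclusion follows immediately from the chain $B \in \cM^{(C_4)}_2 \subseteq \cM^{(4)}_2$ and $\mathrm{mconv}\bigl(\cP^{(C_4)}\bigr)_2 \subseteq \mathrm{mconv}\bigl(\cP^{(4)}\bigr)_2$, together with the contrapositive of Proposition~\ref{prop:DDN-separation}.
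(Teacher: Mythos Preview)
Your proposal is correct and follows the same overall strategy as the paper: use the inclusion $\mathrm{mconv}(\cP^{(C_4)})\subseteq\mathrm{mconv}(\cP^{(4)})$, invoke Proposition~\ref{prop:DDN-separation} for $n=4$, $s=2$, and certify infeasibility via a dual witness $Y\succeq 0$ orthogonal to $\mathcal{S}^{(4)}$ with $\langle Y,\varphi(B)+\psi(B)\rangle<0$. The one place where the paper is more direct than your plan is in producing the candidate $B$: rather than parametrising $\cM^{(C_4)}_2$ and searching numerically, the paper simply takes the known $n=4$ counterexample $A\in\cM^{(4)}_2$ from \cite{DDN20} and averages it over the cyclic shift, $B_{ij}=\tfrac14\sum_{k=0}^{3}A_{i+k,\,j+k}$ (indices mod $4$). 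This $\mathbb{Z}_4$-averaging automatically enforces commutation with $A_{C_4}$ while preserving positivity and the magic relations (Proposition~\ref{prop:Z4-averaging}), so it replaces your search-and-round step by a single explicit formula; your remark about exploiting the symmetry group of $C_4$ is exactly the right instinct, and this is its cleanest implementation. For the dual certificate $Y$ the paper, like you, resorts to a numerical SDP (CVXPY) and does not carry out the exact rationalisation you describe, so on that point your plan is, if anything, more scrupulous than what the paper actually does.
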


\begin{proof}
We start from the same counterexample matrix
\(A \in \cM_2^{(4)}\) used in \cite{DDN20}.
The matrix \(A\) is a quantum magic square but does not commute with
the adjacency matrix of \(C_4\), hence it is not suitable for the
graph-constrained setting.

To impose the \(C_4\)-symmetry, we average \(A\) along the cyclic
group of graph automorphisms:
\[
  B_{ij}
  := \frac{1}{4} \sum_{k=0}^3 A_{\,i+k,\, j+k},
  \qquad (\text{indices mod } 4).
\]
The averaging preserves positivity and the magic relations, so
\(B\) is still a QMS.
The resulting
matrix satisfies
\[
A_{C_4} B = B A_{C_4},
\]
and therefore \(B \in \cM^{(C_4)}\).

Proposition~\ref{prop:Z4-averaging} gives more details about the averaging construction.

Now, we recall the necessary condition for a matrix \(A\) to be in the matrix convex hull of the set of quantum permutation matrices
(Proposition~\ref{prop:DDN-separation}).
If \(A \in \operatorname{mconv}(\mathcal{P}^{(4)})_2\), then there exists
\[
X \in \bigl(\cZ_e^{(4)} \otimes \cZ_e^{(4)}
           \otimes \Mat{\C}{2}\bigr)_{\mathrm{her}}
\]
such that
\[
\varphi(A) + \psi(A) + X \succeq 0.
\]
We apply this to \(A=B\) and set
\[
M_B := \varphi(B) + \psi(B).
\]
Thus, if \(B\) belonged to \(\mathrm{mconv}(\cP^{(4)})_2\) (and hence to
\(\mathrm{mconv}(\cP^{(C_4)})_2\)), there would exist
\[
X \in \bigl(\cZ_e^{(4)} \otimes \cZ_e^{(4)}
           \otimes \Mat{\C}{2}\bigr)_{\mathrm{her}}
\]
such that
\begin{equation}\label{eq:MB-plus-X-psd}
M_B + X \succeq 0.
\end{equation}

Let \(\{z_a\}\) be a basis of \(\cZ_e^{(4)}\) and
\(\{S_j\}\) a Hermitian basis of \(\Mat{\C}{2}\).
Every admissible \(X\) can then be written as
\[
X = \sum_{a,b,j} \xi_{a,b,j}\,
     Z_{a,b,j},
\qquad
Z_{a,b,j} := z_a \otimes z_b \otimes S_j.
\]

To exclude the possibility of an X satisfying \eqref{eq:MB-plus-X-psd}, we use the dual formulation.

Consider a positive semidefinite matrix \(Y \succeq 0\) such that
\begin{equation}\label{eq:dual-conditions}
\Tr\!\bigl(Y Z_{a,b,j}\bigr)=0
\quad\text{for all }a,b,j,
\qquad\text{and}\qquad
\Tr\!\bigl(Y M_B\bigr) < 0.
\end{equation}
(The construction of such a matrix \(Y\), via a semidefinite program,
is described in Appendix~\ref{sec:dual-certificate}.)  
Then, for any admissible \(X\) as above, we have
\[
\Tr\!\bigl(YX\bigr)
 = \sum_{a,b,j} \xi_{a,b,j}\,
   \Tr\!\bigl(Y Z_{a,b,j}\bigr)
 = 0
\]
by the first condition in~\eqref{eq:dual-conditions}, and hence
\[
\Tr\!\bigl(Y(M_B+X)\bigr)
 = \Tr\!\bigl(Y M_B\bigr) + \Tr\!\bigl(YX\bigr)
 = \Tr\!\bigl(Y M_B\bigr)
 < 0.
\]
In particular, \(M_B+X\) cannot be positive semidefinite, so
\eqref{eq:MB-plus-X-psd} fails for every admissible \(X\).
By Proposition~\ref{prop:DDN-separation}, this implies
\[
B \notin \operatorname{mconv}(\mathcal{P}^{(4)})_2,
\]
and therefore
\(B \notin \operatorname{mconv}(\cP^{(C_4)})\).
\end{proof}

Next proposition justify the averaging used in Theorem~\ref{thm:C4-counterexample}.

\begin{proposition}\label{prop:Z4-averaging}
Let \(A=\bigl(A_{ij}\bigr)_{i,j=1}^4 \in \operatorname{Mat}_4(\Her{s})\) be a \(4\times 4\) quantum magic square of internal size \(s\).
Let \(P_{C_4}\) be the \(4\times 4\) permutation matrix
\[
P_{C_4} e_i = e_{\,i+1 \pmod 4}, \qquad i=1,\dots,4,
\]
that is,
\[
P_{C_4} =
\begin{pmatrix}
0 & 0 & 0 & 1\\
1 & 0 & 0 & 0\\
0 & 1 & 0 & 0\\
0 & 0 & 1 & 0
\end{pmatrix}.
\]
Define
\[
\Phi(A) := \frac{1}{4}\sum_{k=0}^{3} 
\bigl(P_{C_4}^{k}\otimes I_s\bigr)\,
A\,
\bigl(P_{C_4}^{k}\otimes I_s\bigr)^{*}.
\]
Then:
\begin{enumerate}
    \item[(i)] \(\Phi(A)\) is a quantum magic square;
    \item[(ii)] \(\Phi(A)\) commutes with the adjacency matrix of \(C_4\), i.e.
    \[
    (A_{C_4}\otimes I_s)\, \Phi(A) = \Phi(A)\, (A_{C_4}\otimes I_s);
    \]
    \item[(iii)]  we have (entrywise):
    \[
    \Phi(A)_{ij}
      = \frac{1}{4}\sum_{k=0}^3 A_{\,i+k,\; j+k},
    \qquad (\text{indices } i,j \text{ mod } 4).
    \]
\end{enumerate}
\end{proposition}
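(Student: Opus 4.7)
My plan is to dispatch the three items in the order (iii), (i), (ii), since the latter two fall out quickly from the blockwise formula in (iii) together with the group-theoretic structure of the averaging.

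First I would establish (iii) by directly unwinding the conjugation. With the convention $(P_{C_4})_{ab} = \delta_{a,\,b+1 \bmod 4}$, one has $(P_{C_4}^k)_{ab} = \delta_{a,\,b+k}$ and $(P_{C_4}^k)^{*} = P_{C_4}^{-k}$, so a short index calculation gives
\[
\bigl((P_{C_4}^{k}\otimes I_s)\,A\,(P_{C_4}^{k}\otimes I_s)^{*}\bigr)_{ij} \;=\; A_{\,i-k,\; j-k},
\]
with indices mod $4$. Reindexing the cyclic sum via $k \mapsto -k$ (a bijection of $\Z/4\Z$) then converts this into the claimed $A_{i+k,\,j+k}$ form.

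Given (iii), part (i) is essentially immediate. Each shifted block $A_{i+k,\,j+k}$ lies in $\operatorname{Psd}_s(\C)$, so the average $\Phi(A)_{ij}$ is PSD. For the row sums, I would interchange the order of summation and use that for every fixed $k$ the map $j \mapsto j+k$ is a bijection of $\Z/4\Z$; this reduces $\sum_{j}\Phi(A)_{ij}$ to the average of the row sums of $A$, each of which equals $I_s$ by the magic relations for $A$. Column sums are handled symmetrically.

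For (ii), I would set $U := P_{C_4} \otimes I_s$, which is unitary because $P_{C_4}$ is a real permutation matrix. Then $\Phi(A)$ is by construction the orbit-average of $A$ under the cyclic group $\langle U \rangle \cong \Z/4\Z$ acting by conjugation, and reindexing $k \mapsto k+1$ in the cyclic sum yields $U\Phi(A)U^{*} = \Phi(A)$, i.e.\ $[U,\Phi(A)] = 0$. The $4$-cycle has adjacency matrix $A_{C_4} = P_{C_4} + P_{C_4}^{-1}$, so $A_{C_4}\otimes I_s = U + U^{-1}$ lies in the commutative subalgebra generated by $U$, and therefore it commutes with $\Phi(A)$ as well. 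There is no genuine obstacle in this proposition; the only delicate point is getting the index conventions right in (iii), so I would fix the action of $P_{C_4}$ unambiguously at the outset and propagate it carefully, since the rest of the argument rests on that formula.
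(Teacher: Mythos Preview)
Your proposal is correct and follows essentially the same route as the paper: the paper also computes the blockwise formula via $(P_{C_4}^k A P_{C_4}^{-k})_{ij}=A_{i-k,j-k}$ and reindexes, derives commutation with $A_{C_4}\otimes I_s$ from $U\Phi(A)U^{-1}=\Phi(A)$ together with $A_{C_4}=P_{C_4}+P_{C_4}^{-1}$, and obtains the QMS property from the fact that conjugation by $\widetilde P_{C_4}^k$ permutes blocks. The only cosmetic differences are the order of the parts and that the paper phrases (i) as ``convex combination of conjugated QMS'' rather than invoking the entrywise formula directly.
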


\begin{proof}
Let us set \(\widetilde P_{C_4} := P_{C_4}\otimes I_s\).  
Then \(\widetilde P_{C_4}\) is unitary, so \(\widetilde P^*_{C_4}=\widetilde P^{-1}_{C_4}\) and,
for every \(k \in \Z\),
\[
\bigl(\widetilde P_{C_4}^{\,k}\bigr)^*
= \bigl(\widetilde P_{C_4}^*\bigr)^{k}
= \widetilde P_{C_4}^{-k}.
\]
Moreover \((P_{C_4})^4 = I_4\) implies \(\widetilde P_{C_4}^{\,4} = I_4\otimes I_s\).

\smallskip

(ii) 
We start showing that \(\Phi(A)\) commutes with \(\widetilde P_{C_4}\).
By definition of \(\Phi\) and linearity we have
\begin{align*}
\widetilde P_{C_4}\,\Phi(A)\,\widetilde P_{C_4}^{-1}
&= \widetilde P_{C_4} \left( \frac{1}{4}\sum_{k=0}^{3} 
\widetilde P_{C_4}^{\,k} A \bigl(\widetilde P_{C_4}^{\,k}\bigr)^{*} \right) \widetilde P_{C_4}^{-1} \\
&= \frac14 \sum_{k=0}^3 
\widetilde P_{C_4}^{\,k+1} A \widetilde P_{C_4}^{-(k+1)}.
\end{align*}
Changing the index \(m=k+1 \pmod 4\), the sum becomes
\[
\frac14 \sum_{m=0}^3 \widetilde P_{C_4}^{\,m} A \widetilde P_{C_4}^{-m} = \Phi(A),
\]
so \(\widetilde P_{C_4}\,\Phi(A)\,\widetilde P_{C_4}^{-1}=\Phi(A)\), i.e. 
\begin{equation}\label{comm_P_tilde}
    \widetilde P_{C_4}\,\Phi(A)=\Phi(A)\,\widetilde P_{C_4}.
\end{equation}

The adjacency matrix of the undirected cycle is
\[
A_{C_4} = P_{C_4} + P_{C_4}^*,
\qquad
A_{C_4}\otimes I_s
= \widetilde P_{C_4} + \widetilde P_{C_4}^{*}.
\]
Since \(\Phi(A)\) is a convex combination of conjugates of \(A\), if \(A\) is Hermitian \(\Phi(A)\) is too. Taking adjoints in \eqref{comm_P_tilde}, we get
\[
\Phi(A)\,\widetilde P_{C_4}^{*} = \widetilde P_{C_4}^{*} \,\Phi(A).
\]
So \(\Phi(A)\) commutes with both \(\widetilde P_{C_4}\) and \(\widetilde P_{C_4}^{*}\), and so does their sum:
\[
(A_{C_4}\otimes I_s)\,\Phi(A)
= \Phi(A)\,(A_{C_4}\otimes I_s).
\]
This proves (ii).

\smallskip

(iii)
For a block matrix \(M\), the \((i,j)\)-th block is
\(M_{ij}=(e_i^*\otimes I_s)\,M\,(e_j\otimes I_s)\).

For a fixed \(k\), we have
\begin{align*}
\bigl(\widetilde P_{C_4}^{\,k} A \widetilde P_{C_4}^{-k}\bigr)_{ij}
&= (e_i^*\otimes I_s)\, \widetilde P_{C_4}^{\,k} A \widetilde P_{C_4}^{-k}\, (e_j\otimes I_s) \\
&= \bigl((e_i^* P_{C_4}^{\,k})\otimes I_s\bigr)\, A \,
   \bigl((P_{C_4}^{-k} e_j)\otimes I_s\bigr).
\end{align*}
Using the equalities \(P_{C_4}^{-k}e_j=e_{\,j-k}\) and
\(e_i^* P_{C_4}^{\,k} = (P_{C_4}^{-k}e_i)^* = e_{\,i-k}^*\), we get
\[
\bigl(\widetilde P_{C_4}^{\,k} A \widetilde P_{C_4}^{-k}\bigr)_{ij}
= (e_{\,i-k}^* \otimes I_s)\, A\, (e_{\,j-k}\otimes I_s)
= A_{\,i-k,\; j-k}.
\]
Therefore,
\[
\Phi(A)_{ij}
= \left(\frac14 \sum_{k=0}^3 \widetilde P_{C_4}^{\,k} A \widetilde P_{C_4}^{-k}\right)_{ij}
= \frac14 \sum_{k=0}^3 A_{\,i-k,\; j-k}.
\]
Since the indices are taken modulo \(4\), summing over \(k\) is equivalent to sum over \(-k\), and then we can write
\[
\Phi(A)_{ij}
= \frac14 \sum_{k=0}^3 A_{\,i+k,\; j+k}
\]
proving (iii).

\smallskip

(i)
Recall that a quantum magic square \(A\) satisfies:
\begin{enumerate}
\item[(a)] \(A_{ij} \succeq 0\) for all \(i,j\);
\item[(b)] \(\sum_{j=1}^4 A_{ij} = I_s\) for each row \(i\);
\item[(c)] \(\sum_{i=1}^4 A_{ij} = I_s\) for each column \(j\).
\end{enumerate}
Conjugation by \(\widetilde P_{C_4}^k\) just permutes rows and columns (and hence the
blocks) of \(A\), so each \(\widetilde P_{C_4}^{\,k} A \widetilde P_{C_4}^{-k}\) is again a QMS:
the blocks remain positive semidefinite and the row/column sums stay equal to \(I_s\).

\(\Phi(A)\) is a convex combination of these matrices, so:
\begin{itemize}
\item each block \(\Phi(A)_{ij}\) is a convex combination of positive semidefinite blocks, hence \(\Phi(A)_{ij}\succeq 0\);
\item the row-sums and column-sums of \(\Phi(A)\) are averages of \(I_s\), hence still equal to \(I_s\).
\end{itemize}
Therefore, since \(\Phi(A)\) satisfies (a), (b), (c), it is a quantum magic square and this proves (i).
\end{proof}

\subsection{Quantum Magic Squares as Free Spectrahedra}\label{sec:free-spec}
In this section, we study quantum magic squares
as free spectrahedra, following
\cite{DDN20}.
\begin{theorem}\label{thm:Mn-free-spect}
The set \(\cMn\) of quantum magic squares is affinely isomorphic to a compact free spectrahedron.
More precisely, after an affine translation, it can be described as
\[
\widehat{\mathcal M}^{(n)}_s
=
\Bigl\{\, Y = (Y_{ij})_{i,j=1}^{n} \in \operatorname{Mat}_n(\Her{s}) \,\Big|\,
I_{n^2}\otimes I_s
+ \sum_{i=1}^{n}\sum_{j=1}^{n} A_{ij}\otimes Y_{ij}
\succeq 0
\Bigr\},
\]
where the matrices \(A_{ij}\) arise from the affine parametrization of the magic relations.
\end{theorem}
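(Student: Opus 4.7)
The approach is to construct explicit coefficient matrices $A_{ij}$ for which the monic pencil
$L_A(Y) := I_{n^2} \otimes I_s + \sum_{i,j=1}^n A_{ij} \otimes Y_{ij}$
satisfies $L_A(Y) \succeq 0$ if and only if $Y \in \cMn_s$. First I would recast the defining conditions of a quantum magic square, namely the blockwise positivity $Y_{ij} \succeq 0$ and the magic relations $\sum_k Y_{ik} = I_s$, $\sum_k Y_{kj} = I_s$, as a single PSD condition on an $n^2 s \times n^2 s$ matrix. The natural first attempt is the block-diagonal arrangement $\bigoplus_{i,j} Y_{ij}$, whose positivity captures the $Y_{ij} \succeq 0$ part but carries vanishing constant term and does not encode the magic equalities.

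The key step is to promote this non-monic pencil to a monic one by absorbing the magic relations into the constant term. The idea is to use the magic relations themselves to rewrite $I_s$ as a linear combination of the $Y_{k\ell}$: for example, summing all row relations gives $\sum_{k,\ell} Y_{k\ell} = n\, I_s$, so $I_s = \tfrac{1}{n}\sum_{k,\ell} Y_{k\ell}$. Substituting such an expression for $I_s$ inside each diagonal block of the pencil, and reorganising the resulting terms, produces an identity constant $I_{n^2}\otimes I_s$ balanced by linear corrections in the $Y_{k\ell}$. The coefficient matrices $A_{ij}$ that emerge from this reorganisation are the ones of the monic LMI. The remaining magic equalities that are not already enforced by the positivity of the rewritten blocks can be encoded in additional diagonal blocks, each expressing one equality $\sum_k Y_{ik} - I_s = 0$ (or its column analogue) as a pair of opposing PSD inequalities suitably rescaled so as to fit the monic form.

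Then I would verify both inclusions. If $Y \in \cMn_s$, substituting $\sum_{k,\ell} Y_{k\ell} = n\, I_s$ reduces each positivity block back to $Y_{ij} \succeq 0$, and the equality-enforcing blocks collapse to zero, so $L_A(Y) \succeq 0$. Conversely, if $L_A(Y) \succeq 0$, the opposing-inequality blocks force the magic relations exactly, and with those in hand the remaining diagonal blocks reduce to $Y_{ij} \succeq 0$. Compactness follows immediately from the norm bound $0 \preceq Y_{ij} \preceq I_s$ satisfied by every QMS.

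The main obstacle is reconciling the monic form, under which $L_A(0) = I_{n^2 s} \succeq 0$, with the fact that $Y = 0$ does not satisfy the magic relations. The construction resolves this by embedding the equalities into the linear part of the pencil rather than treating them as independent side conditions: the coefficient matrices $A_{ij}$ must be designed so that any violation of the magic constraints produces a negative eigenvalue in one of the equality-encoding blocks. Getting the placement, signs, and scalings of these contributions correct, and verifying that the resulting pencil faithfully captures $\cMn_s$ without spurious feasible points, is the technical heart of the proof.
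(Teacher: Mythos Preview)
Your proposal contains a genuine gap that you yourself identify but do not resolve: a monic pencil \emph{cannot} encode affine equality constraints over the full set of $n^2$ variables. Since $L_A(0)=I_{n^2 s}\succ 0$ and $L_A$ is continuous, the feasible set $\{Y:L_A(Y)\succeq 0\}$ always contains an open neighbourhood of the origin in $(\Her{s})^{n^2}$; in particular it is full-dimensional there. But the set of $Y$ satisfying the magic relations (even after any affine shift) is a proper affine subspace of $(\Her{s})^{n^2}$ and hence has empty interior. Consequently no choice of coefficient matrices $A_{ij}$ can make the feasible set of a monic LMI in all $n^2$ block-variables equal to $\cMn_s$. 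Your proposed ``opposing PSD inequalities'' for an equality $\ell(Y)=0$ become, in monic form, $I_s\pm\ell(Y)\succeq 0$, which only yields $-I_s\preceq\ell(Y)\preceq I_s$; and your substitution $I_s=\tfrac1n\sum_{k,\ell}Y_{k\ell}$ is only valid \emph{after} the magic relations hold, so using it to rewrite the pencil presupposes what must be enforced.

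The paper avoids this obstruction by first \emph{solving} the magic relations rather than encoding them. The row and column sums determine the last row and column in terms of the $(n-1)^2$ blocks $\{X_{ij}\}_{1\le i,j\le n-1}$, giving an affine parametrization $\Psi_{\mathrm{magic}}$ of the magic-relation subspace. One then collects the $n^2$ positivity conditions $X_{kl}\succeq 0$ (now affine in the $(n-1)^2$ free blocks) into a block-diagonal pencil. The shift $X_{ij}=\tfrac1n(I_s+Y_{ij})$ centres this pencil at the uniform QMS $X_{kl}=\tfrac1n I_s$, and a short computation (or simply evaluating the pencil at that point) shows that the constant term becomes $\tfrac1n I_{n^2}$, giving a monic LMI in the $(n-1)^2$ shifted variables $Y_{ij}$. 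Thus the free-spectrahedral description lives on the reduced variable space, where there are no equalities left to enforce; this is the step your outline is missing. Compactness then follows as you say.
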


A closely related point of view is given by \cite{BNS23}, where quantum magic squares are identified with the maximal matrix convex set associated to the (centered) Birkhoff polytope.
More precisely, in their notation \(\mathcal M^{(n)}_s = (B_N)_{\max}(s)\) via an explicit description in terms of linear inequalities.
Our approach here provides an explicit (after an affine translation) monic linear pencil adapted to magic relations, which we will later use to enforce commutation with the adjacency matrix of a graph.

In what follows, we give an explicit monic linear pencil for the quantum magic square.

\paragraph{Affine system for QMS.}

Fix \(n\ge 2\) and \(s\ge 1\), and consider block matrices \(X=(X_{kl})_{1\le k,l\le n}\) with \(X_{kl}\in\Her{s}\) satisfying the \emph{magic relations}, that is,
\[
\sum_{l=1}^n X_{kl}=I_s\quad(1\le k\le n),\qquad
\sum_{k=1}^n X_{kl}=I_s\quad(1\le l\le n),\qquad
X_{kl}\succeq 0\ \ \forall k,l.
\]
These are affine equations in the blocks \(X_{kl}\).
By solving the affine system, we can express the last row and the last column in terms of the
\((n-1)^2\) independent blocks \(\{X_{ij}\}_{1\le i,j\le n-1}\):
\begin{align*}
X_{i n} &= I_s - \sum_{j=1}^{n-1} X_{ij}\quad(1\le i\le n-1),\\
X_{n j} &= I_s - \sum_{i=1}^{n-1} X_{ij}\quad(1\le j\le n-1),\\
X_{n n} &= -(n-2)\,I_s + \sum_{i=1}^{n-1}\sum_{j=1}^{n-1} X_{ij}.
\end{align*}

Let the \((n-1)^2\) blocks \(\{X_{ij}\}_{1\le i,j\le n-1}\) be the independent variables; then each entry \(X_{kl}\) can be written as an affine combination:
\begin{equation}\label{eq:qms-affine-expansion}
X_{kl}\;=\;\alpha_{kl}\,I_s\;+\;\sum_{i=1}^{n-1}\sum_{j=1}^{n-1} c_{kl}^{\,ij}\,X_{ij},
\qquad 1\le k,l\le n,
\end{equation}
with coefficients \(\alpha_{kl}\in\{0,1,2-n\}\) and \(c_{kl}^{\,ij}\in\{-1,0,1\}\) determined by the relations above.
Collecting all blocks, this defines the affine parametrization
\[
\Psi_{\mathrm{magic}} : (\Her{s})^{(n-1)^2} \to \cMn_s,\qquad
\Psi_{\mathrm{magic}}(X_{11},X_{12},\dots,X_{n-1,n-1}) = (X_{kl})_{k,l=1}^n,
\]
whose image is exactly \(\cMn_s\).

\paragraph{Linear pencil collecting the positivity constraints.}

The inequalities \(X_{kl}\succeq 0\) can be put into a block-diagonal matrix, yielding the linear pencil
\[
L(X)\;=\;\operatorname{diag}\bigl(X_{11},X_{12},\dots,X_{nn}\bigr)\ \in\ \Mat{\R}{n^2}\otimes\Her{s}.
\]
Replacing \eqref{eq:qms-affine-expansion} gives
\begin{equation}\label{eq:qms-pencil}
L(X)\;=\;A_0\otimes I_s\;+\;\sum_{i=1}^{n-1}\sum_{j=1}^{n-1} A_{ij}\otimes X_{ij},
\end{equation}
where \(A_0, A_{ij}\in \Mat{\R}{n^2}\) are diagonal matrices with entries
\[
(A_0)_{(k,l),(k,l)}=\alpha_{kl},\qquad
(A_{ij})_{(k,l),(k,l)}=c_{kl}^{\,ij}.
\]
Thus \(X\in \cMn_s\) if and only if \(L(X)\succeq 0\).

\begin{remark}
Each block of the block-diagonal pencil \(L(X)\) can be viewed as
\[
\begin{pmatrix}
0 & & \\
& X_{kl} & \\
& & 0
\end{pmatrix} 
= I_{kl} \otimes X_{kl},
\]
where \(I_{kl}\) is the matrix with a \(1\) at the \((k,l)\)-th diagonal entry and \(0\) elsewhere. 
This shows explicitly how the diagonal entries of \(A_0\) and \(A_{ij}\) correspond to the coefficients \(\alpha_{kl}\) and \(c_{kl}^{ij}\) in \eqref{eq:qms-affine-expansion}.
\end{remark}

\paragraph{Monic linear form.}

Setting
\[
X_{ij}\;=\;\frac{1}{n}\,\bigl(I_s+Y_{ij}\bigr)\qquad(1\le i,j\le n-1).
\]
and substituting it into \eqref{eq:qms-pencil} gives us
\[
L(Y)\;=\;\Bigl(A_0+\frac{1}{n}\sum_{i,j}A_{ij}\Bigr)\otimes I_s
\;+\;\frac{1}{n}\sum_{i,j}A_{ij}\otimes Y_{ij}.
\]
Now we show that
\[
A_0+\frac{1}{n}\sum_{i,j}A_{ij}\;=\;\frac{1}{n}\,I_{n^2}.
\]
Indeed, the block matrix with all entries equal to \(\frac{1}{n}I_s\) satisfies \eqref{eq:qms-affine-expansion} for every \((k,l)\), and hence it is a quantum magic square. Choosing all independent variables equal to \(\frac{1}{n}I_s\) yields
\[
\frac{1}{n}I_s
\;=\;
\alpha_{kl}\,I_s+\frac{1}{n}\sum_{i,j}c_{kl}^{\,ij}\,I_s
\quad\Longrightarrow\quad
\alpha_{kl}+\frac{1}{n}\sum_{i,j}c_{kl}^{\,ij}\;=\;\frac{1}{n}.
\]
Thus each diagonal entry of \(A_0+\frac{1}{n}\sum_{i,j}A_{ij}\) equals \(\tfrac{1}{n}\), and we obtain
\[
L(Y)\;=\;\frac{1}{n}\,\Bigl(I_{n^2}\otimes I_s\;+\;\sum_{i,j}A_{ij}\otimes Y_{ij}\Bigr).
\]
Multiplying by \(n>0\) yields the \emph{monic} linear pencil
\[
\widehat L(Y)\;=\;I_{n^2}\otimes I_s\;+\;\sum_{i=1}^{n-1}\sum_{j=1}^{n-1} A_{ij}\otimes Y_{ij}\ \succeq\ 0.
\]
We can now prove Theorem~\ref{thm:Mn-free-spect}, since the explicit monic pencil constructed above characterizes the entire set \(\cM^{(n)}_s\).

\begin{proof}

The set \(\mathcal{M}^{(n)}\) is compact: it is closed, since it is defined by affine linear equalities and semidefinite
constraints, and bounded, because all blocks \(X_{ij}\succeq 0\) satisfy
\(X_{ij}\preceq I_s\) as a consequence of the magic relations
\eqref{eq:magic}.
Hence \(\mathcal{M}^{(n)}_s\) is compact for each level \(s\),
and so is the set \(\mathcal{M}^{(n)} = \bigcup_s \mathcal{M}^{(n)}_s\).
\end{proof} 

\begin{corollary}[cf.~\cite{DDN20}]
The set of quantum magic squares \(\mathcal{M}^{(n)}\) coincides with the matrix convex hull of its Arveson extreme points. 
\end{corollary}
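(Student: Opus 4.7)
The plan is to deduce the corollary directly by combining the two key results already established in the excerpt: Theorem~\ref{thm:Mn-free-spect}, which exhibits $\cM^{(n)}$ as a compact free spectrahedron via the explicit monic pencil $\widehat L(Y)=I_{n^2}\otimes I_s+\sum_{i,j}A_{ij}\otimes Y_{ij}$, and Theorem~\ref{thm:EvertHelton}, which states that any compact free spectrahedron is recovered as the matrix convex hull of its Arveson extreme points.

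First I would note that Theorem~\ref{thm:Mn-free-spect} provides all that is needed to invoke Evert--Helton: it supplies the defining monic linear pencil and verifies compactness (each block of a quantum magic square is sandwiched between $0$ and $I_s$, so $\cMn_s$ is closed and bounded for every $s$). The identification $\cM^{(n)}=\cD_{\widehat L}$ is thus explicit. Then I would apply Theorem~\ref{thm:EvertHelton} to $\cD:=\cM^{(n)}$ to conclude
\[
\cM^{(n)}=\mathrm{mconv}\bigl(\operatorname{ArvesonExt}(\cM^{(n)})\bigr),
\]
which is the desired statement.

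There is essentially no obstacle here: the corollary is a direct specialization of the Evert--Helton theorem, where the only non-trivial input is the explicit free-spectrahedral presentation of $\cM^{(n)}$ already constructed in Section~\ref{sec:free-spec}. The only point worth remarking is that one must invoke the result at every fixed level $s$ (since $\cMn_s$ is compact for each $s$) and then take the union, which is precisely how the matrix convex hull of Arveson extreme points is defined in Definition~\ref{def:arveson} and Theorem~\ref{thm:EvertHelton}.
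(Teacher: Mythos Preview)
Your proposal is correct and follows exactly the same route as the paper: invoke Theorem~\ref{thm:Mn-free-spect} to get that $\cM^{(n)}$ is a compact free spectrahedron, and then apply Theorem~\ref{thm:EvertHelton}. The paper's argument is in fact even terser than yours, omitting the remark about levels~$s$.
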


This follows directly from the fact that \(\mathcal{M}^{(n)}\) is a compact free spectrahedron.
Indeed, by Theorem~\ref{thm:EvertHelton}, 
the matrix convex hull of the Arveson extreme points of a compact 
free spectrahedron coincides with the entire set.
\begin{remark}[Explicit coefficients]
An alternative proof follows by computing directly that \(A_0+\frac{1}{n}\sum_{i,j}A_{ij}=\tfrac{1}{n}I_{n^2}\), since the coefficients in \eqref{eq:qms-affine-expansion} are explicit:
\begin{itemize}
\item \(1\le k,l\le n-1\): \(\alpha_{kl}=0,\ c_{kl}^{\,ij}=\delta_{k,i}\delta_{l,j}\).
\item \(1\le i\le n-1\): \(X_{i n}=I_s-\sum_{j=1}^{n-1}X_{ij}\Rightarrow\alpha_{in}=1,\ c_{in}^{\,ij}=-\delta_{i,i}\).
\item \(1\le j\le n-1\): \(X_{n j}=I_s-\sum_{i=1}^{n-1}X_{ij}\Rightarrow\alpha_{nj}=1,\ c_{nj}^{\,ij}=-\delta_{j,j}\).
\item \(X_{nn}=-(n-2)I_s+\sum_{i,j=1}^{n-1}X_{ij}\Rightarrow\alpha_{nn}=2-n,\ c_{nn}^{\,ij}=1\).
\end{itemize}
With these values, we can verify that \(\alpha_{kl}+\frac{1}{n}\sum_{i,j}c_{kl}^{\,ij}=\frac{1}{n}\) for all \((k,l)\).
\end{remark}
\medskip
\noindent
In summary, the set of quantum magic squares admits a concrete description as a free spectrahedron, defined by the monic linear pencil \(\widehat L(Y)\). 
This description highlights the geometry of \(\cM^{(n)}\) as a compact matrix convex set: it is completely determined by finitely many linear matrix inequalities.
In the next section, we extend this framework to the set of graph quantum magic squares.
\subsection{Graph Quantum Magic Squares as Free Spectrahedra}\label{sec:GQMS_construction}

We now extend the previous construction to the case of graph quantum magic squares. 
The key idea is to encode the commutation relations with the adjacency matrix of the graph into the affine system and the resulting linear pencil.

For \(k\)-regular graphs, this set admits an explicit (monic) linear matrix inequality representation.
\begin{theorem}\label{thm:kregular_GQMS_freespec}
Let \(\Gamma\) be a \(k\)-regular graph and let \(s \ge 1\).
There exists an affine translation
\[
\widehat{\mathcal M}^{(\Gamma)}_s
=
\mathcal M^{(\Gamma)}_s - X^\circ
\]
such that the translated set admits a monic LMI representation of the form
\[
\widehat{\mathcal M}^{(\Gamma)}_s
=
\bigl\{\, Y \in \operatorname{Mat}_n(\Her{s}) \mid
I_{n^2}\otimes I_s
+ \sum_{(i',j')\in \mathcal{I}_{\mathrm{ind}}}
B_{i'j'} \otimes Y_{i'j'}
\succeq 0
\,\bigr\},
\]
where \(\mathcal{I}_{\mathrm{ind}}\) denotes the set of indices corresponding to the
independent variables in the affine parametrization.

In particular, the graded set
\[
\mathcal M^{(\Gamma)} = \bigcup_{s \ge 1} \mathcal M^{(\Gamma)}_s
\]
is affinely isomorphic to a compact free spectrahedron.
Hence \(\mathcal M^{(\Gamma)}\) is a compact matrix convex set.
\end{theorem}

In the next subsections, we construct this pencil explicitly, first for the simple case of complete graphs, and then for general \(k\)-regular graphs.

\subsubsection{Complete graphs}

For the complete graph \(K_n\), the adjacency matrix is
\[
A_{K_n} = J - I_n,
\]
where \(J\) denotes the \(n\times n\) all-ones matrix.
For any quantum magic square \(X=(X_{ij})\), the commutation condition
\[
(A_{K_n}\otimes I_s) X = X (A_{K_n}\otimes I_s)
\]
is automatically satisfied. Indeed, the magic relations imply that both row and column sum to the identity, hence
\[
(XJ)_{ik} = \sum_{j=1}^n X_{ij} = I_s, \qquad
(JX)_{ik} = \sum_{i=1}^n X_{ij} = I_s.
\]
Thus \(X(J\otimes I_s) = (J\otimes I_s)X\), and since \(I_n\) commutes with everything, we obtain
\[
(A_{K_n}\otimes I_s)X = ((J-I_n)\otimes I_s)X = X((J-I_n)\otimes I_s) = X (A_{K_n}\otimes I_s).
\]
Therefore, for \(K_n\) the commutation condition does not add any further restriction. We, hence, have
\[
\Ms[K_n] = \cMn_s,
\]
and the LMI coincides with the one obtained above.
\subsubsection{Affine parametrization under commutation for \(k\)-regular graphs}
Now, let \(\Gamma\) be a \(k\)-regular graph on \(n\) vertices with adjacency matrix \(A_\Gamma\). 
We consider block matrices \(X = (X_{ij})_{i,j=1}^n\) with \(X_{ij} \in \Her{s}\) satisfying both commutation with \(A_\Gamma\) and the magic relations.
The commutation constraints form a homogeneous linear system in the blocks \(\{X_{ij}\}\); let us denote the dimension of its solution space by \(d_\Gamma\).

Before deriving the parametrization, we analyze how the magic constraints interact with the commutation for \(k\)-regular graphs. The following proposition shows that the commutation condition forces row and column sums to be constant on each connected component.

\begin{proposition}[Componentwise row/column sums]\label{prop:regular_sums}
Let \(\Gamma\) be a \(k\)-regular graph with \(N\) connected components
\(\Gamma=\bigcup_{t=1}^N \Gamma_t\) and let \(A_\Gamma\) be its adjacency matrix.

Let \(X=\left(X_{ij}\right)_{i,j=1}^n\in \mathrm{Mat}_n(\mathrm{Her}(s))\) be such that
\[
X(A_\Gamma\otimes I_s)=(A_\Gamma\otimes I_s)X.
\]
Then there exist Hermitian matrices \(\Lambda_R^{(t)},\Lambda_C^{(t)}\in \mathrm{Her}(s)\) such that, for every \(t=1,\dots,N\),
\[
\sum_{j=1}^n X_{ij}=\Lambda_R^{(t)} \quad (\forall i\in V_t),
\qquad
\sum_{i=1}^n X_{ij}=\Lambda_C^{(t)} \quad (\forall j\in V_t),
\]
where \(V_t\) is the set of vertices of the component \(\Gamma_t\).
\end{proposition}
\bigskip
Before proceeding with the proof we recall a property of the eigenspaces of a \(k\)-regular graph.
\begin{observation}
If \(\Gamma\) is \(k\)-regular with \(n\) vertices, then \(A_{\Gamma}\mathbf 1 = k\mathbf 1\), where \(\mathbf 1=(1,\dots,1)^T \in \C^n\).
In particular, if \(\Gamma = \bigsqcup_{t=1}^N \Gamma_t\) is a \(k\)-regular graph with \(N\) connected components, it also holds that
\[
A_\Gamma \mathbf 1_t = k\,\mathbf 1_t \qquad \forall t=1,\dots,N,
\]
where \(\mathbf 1_t\) is the \(n\)-dimensional vector defined by
\[
(\mathbf 1_t)_i=
\begin{cases}
1, & i\in V(\Gamma_t),\\
0, & i\notin V(\Gamma_t),
\end{cases}
\qquad i=1,\dots,n.
\]
Furthermore, since each \(\Gamma_t\) is connected and \(k\)-regular, the eigenspace relative to the eigenvalue \(k\) of \(A_\Gamma\) is \(\ker(A_\Gamma-kI)=\mathrm{span}\{\mathbf 1_t \mid t=1,\dots,N\}\).
So the eigenspace related to the eigenvalue \(k\) of \(A_\Gamma\otimes I_s\) is
\[
\ker(A_\Gamma\otimes I_s-kI)
=
\mathrm{span}\{\mathbf 1_t \otimes e_p \mid t=1,\dots,N, \; p=1,\dots,s\},
\]
where $\{e_1, \dots, e_s\}$ is the canonical basis of $\mathbb{C}^s$.
\end{observation}

\begin{proof}

Let \(V=\{1,\dots,n\}\) be the set of vertices of \(\Gamma=\bigsqcup_{t=1}^N\Gamma_t\),
and \(V_t\) the set of vertices of \(\Gamma_t\).

Now fix an index \(t_0\in\{1,\dots,N\}\) and a vector \(v \in\mathbb C^s\).
From the commutation \(X(A_\Gamma\otimes I_s)=(A_\Gamma\otimes I_s)X\) it follows that
\[
(A_\Gamma\otimes I_s) \left[ X(\mathbf 1_{t_0}\otimes v) \right]
=
X(A_\Gamma\otimes I_s)(\mathbf 1_{t_0}\otimes v)
=
X(k\,\mathbf 1_{t_0}\otimes v)
=
k \left[ X(\mathbf 1_{t_0}\otimes v) \right],
\]
so \(X(\mathbf 1_{t_0}\otimes v)\) is an eigenvector of \(A_\Gamma\otimes I_s\) with eigenvalue \(k\). Therefore there exist coefficients \(c_{tp}(v) \in \mathbb{C}\) such that
\[
X(\mathbf 1_{t_0}\otimes v) = \sum_{t=1}^N \sum_{p=1}^s c_{tp}(v) (\mathbf 1_t \otimes e_p).
\]

The \(i\)-th component of \(X(\mathbf 1_{t_0}\otimes v)\) is
\begin{equation}\label{eq:i_comp_X_v}
    \left[ X(\mathbf 1_{t_0}\otimes v) \right]_i
=
\sum_{j=1}^n X_{ij}(\mathbf 1_{t_0})_j\, v
=
\sum_{j\in V_{t_0}} X_{ij} v.
\end{equation}
On the other hand, if \(i\in V_\ell\) then \((\mathbf 1_\ell)_i=1\) and \((\mathbf 1_t)_i=0\) for \(t\neq \ell\), so from the previous decomposition we obtain
\begin{equation}\label{eq:i_th_block}
\left[ X(\mathbf 1_{t_0}\otimes v) \right]_i = \sum_{p=1}^s c_{\ell p}(v) e_p \qquad (\forall i\in V_\ell).
\end{equation}

So, comparing \eqref{eq:i_comp_X_v} and \eqref{eq:i_th_block}, for each \(\ell\) and for each \(i\in V_\ell\) we have
\[
\sum_{j\in V_{t_0}} X_{ij} v = \sum_{p=1}^s c_{\ell p}(v) e_p.
\]
Since the right-hand side depends only on the component index \(\ell\) (and the vector \(v\)), but not on the specific vertex \(i\), it follows that the matrix sum \(\sum_{j\in V_{t_0}} X_{ij}\) is the same for all \(i \in V_\ell\). 

In particular, by summing over all \(t_0=1,\dots,N\), we obtain that for every \(i \in V_\ell\):
\[
\sum_{j=1}^n X_{ij} = \sum_{t_0=1}^N \left( \sum_{j\in V_{t_0}} X_{ij} \right) =: \Lambda^{(\ell)}_{R},
\]
where \(\Lambda^{(\ell)}_{R}\in \mathrm{Mat}_s(\mathbb C)\) depends only on the connected component \(V_\ell\). 
Since \(X_{ij} \in \mathrm{Her}(s)\), the sum \(\Lambda^{(\ell)}_{R}\) is also Hermitian.
For the columns, we follow the same reasoning with \(X^T\) instead of \(X\):
since \(A_\Gamma^T=A_\Gamma\), the commutation also holds,
\[
X^T(A_\Gamma\otimes I_s)=(A_\Gamma\otimes I_s)X^T.
\]
Applying the above to \(X^T\) we obtain that 
\[
\sum_{i=1}^n X_{ij}
=:\Lambda^{(\ell)}_{C}\in \mathrm{Mat}_s(\mathbb C),
\]
the column sums of \(X\) are constant on each connected component \(l\), yielding the same matrix \(\Lambda^{(\ell)}_{C}\).
\end{proof}

In general, magic relations require that all row and column sums be equal to \(I_s\).
However, thanks to Proposition~\ref{prop:regular_sums}, for a matrix in the commutant of \(A_\Gamma\), it is sufficient to impose this condition once per connected component.
In particular, imposing \(\Lambda_R^{(t)} = I_s\) (and similarly for the columns) fixes the sum for all rows of that component.
Consequently, imposing magic relations removes exactly \(N\) degrees of freedom (where \(N\) is the number of connected components) from the \(d_\Gamma\) independent parameters provided by imposing the commutation constraints.

Now we perform the explicit parameterization.
Let
\[
\mathcal{I} \subseteq \{1,\dots,n\}^2, \qquad |\mathcal{I}| = d_\Gamma,
\]
be a set of independent position indices for the commutant. In practice, we choose the variables
\(X_{i'j'}\) with \((i',j')\in \mathcal{I}\) from the first and second rows
until exactly \(d_\Gamma\) independent parameters are collected.

Solving the commutation constraints yields a linear parametrization
\[
\begin{alignedat}{2}
  &\Psi_{\mathrm{com}} : (\Her{s})^{\cI} &&\longrightarrow 
  \bigl\{\, X \in (\Her{s})^{n\times n} \mid A_\Gamma X = X A_\Gamma \,\bigr\}, \\[4pt]
  &\text{given by}\quad 
  &&\bigl(\Psi_{\mathrm{com}}((X_{i'j'})_{(i',j')\in I})\bigr)_{ij}
    \;=\; \sum_{(i',j')\in \cI} r^{\,ij}_{i'j'}\, X_{i'j'}.
\end{alignedat}
\]

Magic constraints allow us to express \(N\) of these variables as functions of the others.
Let \(\mathcal{K} \subset \mathcal{I}\) be the set of \(N\) dependent indices and \(\mathcal{I}_{\mathrm{ind}} = \mathcal{I} \setminus \mathcal{K}\) be the set of the remaining independent indices.

After the normalization
\[
X_{i'j'} \;=\; \frac{1}{n}\bigl(I_s + Y_{i'j'}\bigr),
\qquad (i',j')\in \mathcal{I}_{\mathrm{ind}},
\]
we obtain an affine parametrization of the form
\[
X_{kl} \;=\; \beta_{kl}\, I_s
  \;+\; \sum_{(i',j')\in \mathcal{I}_{\mathrm{ind}}} d_{kl}^{(i'j')}\, Y_{i'j'}.
\]
Collecting the inequalities \(X_{kl}\succeq 0\) gives the block-diagonal pencil
\[
L(Y) \;=\; B_0 \otimes I_s
  \;+\; \sum_{(i',j')\in \mathcal{I}_{\mathrm{ind}}} B_{i'j'} \otimes Y_{i'j'}.
\]
Finally, we observe that the constant matrix \(\left(X^\circ_{ij}\right)_{ij}=\left(\tfrac{1}{n}I_s\right)_{ij}\) satisfies both commutation and magic constraints. 
Indeed, for any \(k\)-regular graph, the all-ones matrix \(J\) commutes with \(A_\Gamma\) (since \(A_\Gamma J = J A_\Gamma = kJ\)), and thus the scalar matrix \(\tfrac{1}{n}I_s\) is a valid solution.
This implies that the constant term in our parametrized pencil corresponds to this point, and after rescaling, we obtain the monic pencil of Theorem~\ref{thm:kregular_GQMS_freespec}:
\[
\widehat L(Y)
 \;=\; I_{n^2} \otimes I_s
   \;+\; \sum_{(i',j')\in \mathcal{I}_{\mathrm{ind}}} B_{i'j'} \otimes Y_{i'j'}
 \;\succeq\; 0.
\]
\subsubsection{Composition of affine maps and explicit coefficients}

In the previous section, we defined the matrices \(B_{i'j'}\) implicitly via the affine system. 
Here we provide their explicit construction by composing the parametrization of the general QMS with the parametrization of the commutant.

Recall that in the general QMS case, each block admits the expansion
\[
X_{kl} \;=\; \alpha_{kl} I_s
  \;+\; \sum_{i,j=1}^{n-1} c_{kl}^{ij}\, X_{ij},
\]
where the coefficients \(\alpha_{kl}\) and \(c_{kl}^{ij}\) are explicitly determined by the magic relations (taking values respectively in \(\{ 0, 1, 2-n\}\) and \(\{-1, 0, 1\}\)).

In the GQMS case, we restrict the variables \(X_{ij}\) to the subspace of solutions to the commutation constraints. 
Using the linear map \(\Psi_{\mathrm{com}}\) defined above, we can express every block \(X_{ij}\) in terms of the independent variables indexed by \(\mathcal{I}\):
\[
X_{ij} = \bigl(\Psi_{\mathrm{com}}((X_{i'j'})_{(i',j')\in \mathcal{I}})\bigr)_{ij}
 \;=\; \sum_{(i',j')\in \mathcal{I}_{\mathrm{ind}}} r^{\,ij}_{i'j'}\, X_{i'j'},
\]
where the coefficients \(r^{\,ij}_{i'j'}\) come from solving the homogeneous system \((A_\Gamma \otimes I_s) X = X  ( A_\Gamma \otimes I_s)\).

After composing this with the affine parametrization for QMS, we get
\[
\begin{aligned}
X_{kl}
  &= \alpha_{kl} I_s
   + \sum_{i,j=1}^{n-1}
       c_{kl}^{ij}\,
       \bigl(\Psi_{\mathrm{com}}((X_{i'j'})_{(i',j')\in \mathcal{I}})\bigr)_{ij} \\
  &= \alpha_{kl} I_s
   + \sum_{i,j=1}^{n-1} c_{kl}^{ij}
       \sum_{(i',j')\in \mathcal{I}_{\mathrm{ind}}} r^{\,ij}_{i'j'}\, X_{i'j'} \\
  &= \alpha_{kl} I_s
   + \sum_{(i',j')\in \mathcal{I}_{\mathrm{ind}}}
       \Bigl( \sum_{i,j=1}^{n-1} c_{kl}^{ij}\, r^{\,ij}_{i'j'} \Bigr)\, X_{i'j'}.
\end{aligned}
\]
Hence, defining 
\[
d_{kl}^{(i'j')} := \sum_{i,j=1}^{n-1} c_{kl}^{ij}\, r^{\,ij}_{i'j'},
\]
we get that the explicit form of  \(B_{i'j'}\) in the LMI of Theorem~\ref{thm:kregular_GQMS_freespec} is given by diagonal matrices whose entries are precisely the coefficients \(d_{kl}^{(i'j')}\).

Now, if we perform the shift
\[
X_{i'j'} = \frac{1}{n}\bigl(I_s + Y_{i'j'}\bigr),
\qquad (i',j')\in \mathcal{I}_{\mathrm{ind}},
\]
we get exactly the monic affine parametrization
\begin{equation}\label{eq:GQMS_affine_parametrization}
X_{kl} \;=\; \frac{1}{n} I_s
  \;+\; \sum_{(i',j')\in \mathcal{I}_{\mathrm{ind}}} d_{kl}^{(i'j')} \, Y_{i'j'}.
\end{equation}
This confirms that the number of independent Hermitian parameters corresponds exactly to \(d_\Gamma - N\) when \(\Gamma\) has \(N\) connected components. An explicit calculation of this dimension for the cycle graphs \(C_n\) is provided in Appendix~\ref{sec:commutant-dimension}.

\subsection{GQMS and Arveson extreme points}
Theorem~\ref{thm:kregular_GQMS_freespec} establishes that
\(\mathcal{M}^{(\Gamma)}\) is affinely isomorphic to a compact free spectrahedron. We may now apply Theorem~\ref{thm:EvertHelton} to conclude that
\[
\mathcal{M}^{(\Gamma)}
\;=\;
\operatorname{mconv}\bigl(
\mathrm{ArvesonExt}(\mathcal{M}^{(\Gamma)})
\bigr).
\]

In the non-graph case, \cite{DDN20} showed that every quantum
permutation matrix is an Arveson extreme point of the free spectrahedron
\(\mathcal{M}^{(n)}\).
Since graph quantum permutation matrices form a subclass of
quantum permutation matrices subject only to additional linear constraints,
the same argument used in \cite{DDN20} also applies
in the graph setting and we have the following corollary.

\begin{corollary}\label{cor:graph-arveson}
Let \(\Gamma\) be a graph on \(n\) vertices.
Every graph quantum permutation matrix in \(\mathcal{P}^{(\Gamma)}\)
is an Arveson extreme point of the matrix convex set
\(\mathcal{M}^{(\Gamma)}\) of graph quantum magic squares.
\end{corollary}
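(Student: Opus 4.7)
The plan is to reduce the graph-theoretic statement to the corresponding non-graph result of \cite{DDN20}, by exploiting the fact that Arveson extremity is inherited by sub-free-spectrahedra.

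First, I would record the trivial but essential inclusions
\[
\mathcal{M}^{(\Gamma)} \subseteq \mathcal{M}^{(n)},
\qquad
\mathcal{P}^{(\Gamma)} \subseteq \mathcal{P}^{(n)},
\]
which hold levelwise for each internal dimension $s$ by Definition~\ref{def:GQMS}: both graph sets are cut out from their non-graph counterparts by the single additional linear condition of commuting with $I_s \otimes A_\Gamma$. In particular, any block dilation of an element $X\in\mathcal{M}^{(\Gamma)}_s$ that still lies in $\mathcal{M}^{(\Gamma)}_{s+\ell}$ is, a fortiori, a block dilation inside $\mathcal{M}^{(n)}_{s+\ell}$.

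Next, I would invoke the following general principle, which is immediate from Definition~\ref{def:arveson}: if $\mathcal{D}' \subseteq \mathcal{D}$ are matrix convex sets and $X \in \mathcal{D}'$ is an Arveson extreme point of $\mathcal{D}$, then $X$ is also an Arveson extreme point of $\mathcal{D}'$. Indeed, any nontrivial dilation of $X$ inside $\mathcal{D}'$ is in particular a nontrivial dilation inside $\mathcal{D}$, contradicting extremity in the larger set.

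Finally, I would combine these observations with the result of \cite{DDN20}, recalled in the introduction, that every $X \in \mathcal{P}^{(n)}$ is an Arveson extreme point of $\mathcal{M}^{(n)}$. Applying the general principle to $\mathcal{D}' = \mathcal{M}^{(\Gamma)} \subseteq \mathcal{D} = \mathcal{M}^{(n)}$ and to any $X \in \mathcal{P}^{(\Gamma)} \subseteq \mathcal{P}^{(n)}$ immediately yields the desired conclusion. There is no substantive obstacle here: the argument is purely formal, and the only point that deserves attention is ensuring that the ambient dilation space used for $\mathcal{M}^{(\Gamma)}_{s+\ell}$ agrees with the one inherited from $\mathcal{M}^{(n)}_{s+\ell}$, which is clear because the commutation constraint is defined using the very same $n \times n$ block decomposition.
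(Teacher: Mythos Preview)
Your proposal is correct and follows essentially the same approach as the paper's proof: both use the inclusions $\mathcal{M}^{(\Gamma)} \subseteq \mathcal{M}^{(n)}$ and $\mathcal{P}^{(\Gamma)} \subseteq \mathcal{P}^{(n)}$, invoke the result from \cite{DDN20} that quantum permutation matrices are Arveson extreme in $\mathcal{M}^{(n)}$, and then observe that any dilation within the smaller set is a dilation within the larger set, forcing triviality. The only cosmetic difference is that you package the key observation as a general principle about sub-matrix-convex sets, whereas the paper argues it directly for the specific inclusion at hand.
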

\begin{proof}
By definition we have
\[
\mathcal{M}^{(\Gamma)} \subseteq \mathcal{M}^{(n)}, 
\qquad 
\mathcal{P}^{(\Gamma)} \subseteq \mathcal{P}^{(n)}.
\]
Let \(U \in \mathcal{P}^{(\Gamma)}\). 
Then \(U \in \mathcal{P}^{(n)}\), so by 
\cite{DDN20}*{Corollary~20}
\(U\) is an Arveson extreme point of \(\mathcal{M}^{(n)}\).

Now let \(\Tilde{U}\) be any dilation of \(U\) inside \(\mathcal{M}^{(\Gamma)}\), 
in the sense of Definition~\ref{def:arveson}.
Since \(\mathcal{M}^{(\Gamma)} \subseteq \mathcal{M}^{(n)}\), 
\(\Tilde{U}\) is also a dilation of \(U\) inside \(\mathcal{M}^{(n)}\).
Because \(U\) is Arveson extreme in \(\mathcal{M}^{(n)}\), 
this dilation must be trivial, i.e.\ all blocks of \(\Tilde{U}\) split as
\[
\Tilde{u}_{ij} =
\begin{pmatrix}
u_{ij} & 0 \\
0      & \gamma_{ij}
\end{pmatrix}
\]
up to a simultaneous unitary conjugation.
Therefore \(U\) is Arveson extreme also in \(\mathcal{M}^{(\Gamma)}\).
\end{proof}

\subsubsection{Concluding Remarks on the Free-Spectrahedral Setting}

The set \(\mathcal{M}^{(\Gamma)}\) is  affinely isomorphic to a compact free spectrahedron,
defined by finitely many linear matrix inequalities with scalar coefficients.
By Theorem~\ref{thm:EvertHelton},
its matrix convex hull is generated by its Arveson extreme points.

Corollary~\ref{cor:graph-arveson} shows that every graph quantum permutation
matrix belongs to the set \(\operatorname{ArvesonExt}(\mathcal{M}^{(\Gamma)})\).
Hence, in analogy with the non-graph case
\cite{DDN20}*{Corollary~20}, the set
\(\mathcal{M}^{(\Gamma)}\) is strictly larger than that the set
\(\mathcal{P}^{(\Gamma)}\) whenever the Birkhoff--von Neumann property fails
for~\(\Gamma\).

\begin{observation}[\(\mathcal{M}^{(n)}\) and \(\mathcal{M}^{(\Gamma)}\) over a unital \(C^*\)-algebra]
If we take any unital \(C^*\)-algebra \(\mathcal{A}\), all previous results
hold upon replacing \(\Her{s}\) with \(\mathcal{A}_{\mathrm{sa}}\)
and \(I_s\) with \(I_{\mathcal{A}}\).
\end{observation}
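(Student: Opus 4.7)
The plan is to check that every construction in the preceding sections uses the matrix entries $X_{ij}$ only through structure that a general unital $C^*$-algebra $\mathcal{A}$ already provides: its unit $I_{\mathcal{A}}$, involution, positive cone $\mathcal{A}_+$, and addition. Since the scalar coefficients appearing in the affine parametrizations ($\alpha_{kl}$, $c_{kl}^{ij}$, $\beta_{kl}$, $d_{kl}^{(i'j')}$) and in the defining pencils ($A_{ij}$, $B_{i'j'}$) are completely independent of the entry algebra, the machinery should transfer from $\Her{s}$ to $\mathcal{A}_{\mathrm{sa}}$ essentially verbatim.

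Concretely, I would first redefine $\cMn$, $\cPn$, $\cCn$ and their graph counterparts by asking $X_{ij}\in \mathcal{A}_+$, $\sum_k X_{ik}=\sum_k X_{kj}=I_{\mathcal{A}}$, and (in the graph case) $X(I_{\mathcal{A}}\otimes A_\Gamma)=(I_{\mathcal{A}}\otimes A_\Gamma)X$ inside $\Mat{\mathcal{A}}{n}$. Solving the magic and commutation systems is a linear task with scalar coefficients only, so the formula
\[
X_{kl}=\alpha_{kl}\, I_{\mathcal{A}}+\sum_{i,j} c_{kl}^{ij}\, X_{ij}
\]
(together with its graph-refined analogue from \eqref{eq:GQMS_affine_parametrization}) continues to parametrize the set. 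Assembling the blockwise positivity constraints into a block-diagonal element of the $C^*$-algebra $\Mat{\mathcal{A}}{n^2}$ yields the monic pencil $\widehat L(Y)=I_{n^2}\otimes I_{\mathcal{A}}+\sum_{i,j} A_{ij}\otimes Y_{ij}$, whose positivity in $\Mat{\mathcal{A}}{n^2}$ is equivalent to blockwise positivity of the $X_{kl}$ by the standard characterization of positivity for block-diagonal elements. Compactness (in the norm topology of $\mathcal{A}$) follows as in the matrix case from the order-bound $0\preceq X_{ij}\preceq I_{\mathcal{A}}$ forced by the magic relations.

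For the extreme-point statements (Theorem~\ref{thm:EvertHelton} and Corollary~\ref{cor:graph-arveson}), the cleanest route is to fix a faithful unital $*$-representation $\pi:\mathcal{A}\hookrightarrow B(H)$ (provided by GNS), whose amplification $\Mat{\mathcal{A}}{n}\hookrightarrow \Mat{B(H)}{n}$ is an order-preserving $*$-embedding. Any $\mathcal{A}$-valued (graph) quantum magic square then sits inside the corresponding $B(H)$-valued set, and the dilation-theoretic arguments of Corollary~\ref{cor:graph-arveson} apply within $B(H)$ without modification. The main subtlety I expect is purely notational: one must formalize matrix convexity when entries live in $\mathcal{A}$ by allowing the scalar conjugators $V_i$ (originally in $\operatorname{Mat}_{s,t}(\C)$) to act blockwise on $\mathcal{A}$-entries, which is unambiguous because they multiply elements of $\mathcal{A}$ as scalars from $\C\cdot I_{\mathcal{A}}\subseteq Z(\mathcal{A})$. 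With this interpretation, all definitions, the LMI representation of Theorem~\ref{thm:kregular_GQMS_freespec}, and the Arveson extremality of graph quantum permutation matrices transfer without change.
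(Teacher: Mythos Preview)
The paper offers no proof of this observation; it is stated as a bare remark. So there is nothing to compare against, and the question is simply whether your argument stands on its own.

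The parts concerning the definitions, the affine parametrizations \eqref{eq:qms-affine-expansion} and \eqref{eq:GQMS_affine_parametrization}, and the monic LMI form are fine: these manipulations use only scalar coefficients and the order/unit structure of a unital $C^*$-algebra, and they transfer verbatim as you say.

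However, your compactness claim is false in general. You assert that ``compactness (in the norm topology of $\mathcal{A}$) follows as in the matrix case from the order-bound $0\preceq X_{ij}\preceq I_{\mathcal{A}}$.'' For infinite-dimensional $\mathcal{A}$ the order interval $[0,I_{\mathcal{A}}]$ is closed and bounded but \emph{never} norm-compact, so $\cM^{(n)}(\mathcal{A})$ and $\cM^{(\Gamma)}(\mathcal{A})$ are not compact in norm. Any version of this statement would require passing to a weak topology and additional hypotheses (e.g.\ $\mathcal{A}$ a von Neumann algebra), which you do not discuss.

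This gap propagates: Theorem~\ref{thm:EvertHelton} is formulated for free spectrahedra $\bigcup_{s\ge1}\mathcal{D}_A(s)$ graded by \emph{finite} matrix sizes, and both its hypotheses (levelwise compactness) and its proof rely on that grading. Your GNS embedding lands in $B(H)$ for possibly infinite-dimensional $H$, which is not a level of any free spectrahedron, so the Evert--Helton result does not apply there. Relatedly, the matrix-convexity framework you invoke needs a varying internal dimension; for a fixed $\mathcal{A}$, restricting the conjugators $V_i$ to scalars, as you propose, collapses matrix convexity to ordinary convexity and the Arveson notion loses its content. The observation is therefore best read as covering only the structural/LMI results, not the compactness and spanning statements; your proposal does not establish the latter, and as written they are not true for arbitrary unital $C^*$-algebras.
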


\section{Future Directions}

These results suggest some avenues for future work.

\subsection{Beyond \(C_4\): the (Quantum) Birkhoff--von Neumann problem for graphs}

Having found an example of the failure of the Birkhoff--von Neumann theorem for the graph \(C_4\), suggests investigating whether
the Birkhoff--von Neumann theorem fails in the quantum setting for other graphs.
A natural class of candidates is vertex-transitive or symmetric graphs,
such as the cycle \(C_5\), the Petersen graph, and, more generally, any \(k\)-regular graph
with a large automorphism group.

This leads to the conjecture:
\[
\operatorname{mconv}(\mathcal P^{(\Gamma)}) 
\subsetneq 
\mathcal M^{(\Gamma)} \qquad \text{for various graphs }\Gamma.
\]

Both analytic and numerical tests (e.g.\ via symmetry-reduced SDP
formulations) would provide evidence toward a general theory.

\subsection{Arveson extreme points of \(\mathcal M^{(\Gamma)}\)}

For the QMS case, \cites{DDN20, EH19} show that
\[
\mathcal M^{(n)} 
= \operatorname{mconv}\big(\operatorname{ArvesonExt}(\mathcal M^{(n)})\big)
\quad\text{and}\quad
\mathcal P^{(n)} \subseteq \operatorname{ArvesonExt}(\mathcal M^{(n)}).
\]

For the GQMS case, Corollary~\ref{cor:graph-arveson} establishes the
analogous statement:
\[
\mathcal P^{(\Gamma)} \subseteq \operatorname{ArvesonExt}\big(\mathcal M^{(\Gamma)}\big),
\]
i.e.\ every graph quantum permutation matrix is an Arveson extreme point
of the graph quantum magic square spectrahedron.

Then we can ask whether we can characterise the full family of Arveson extreme points.

To identify the ``vertices'' of \(\mathcal M^{(\Gamma)}\), we need a better understanding of its extremal structure, and this also helps us compare the graph and non-graph settings in \cite{DDN20}.

\subsection{Connections with quantum information theory}

Quantum magic squares can be formulated in terms of positive operator-valued measures (POVMs) and thus connect to quantum information theory, for example via non-local games.
Since quantum permutation matrices correspond to projection-valued measurement (PVM), it is natural to ask whether the failure of a Birkhoff-von Neumann theorem type for quantum magic squares can be interpreted as a separation between POVM and PVM.

Hence, a natural direction for future work is to investigate this perspective for Graph Quantum Magic Squares (GQMS) and to identify graphs \(\Gamma\) for which \(\mathcal M^{(\Gamma)}\) exhibits genuinely different behaviors in the PVM vs POVM models. Clarifying these points would further connect GQMS to quantum information theory.

\appendix
\section{Appendix}

\subsection{Dual SDP and Explicit Certificate}\label{sec:dual-certificate}

In this appendix we briefly explain the semidefinite program used in the proof of
Theorem~\ref{thm:C4-counterexample}.

Recall that \(M_B = \varphi(B) + \psi(B)\) and that
\[
\mathcal{S}
 := \bigl(\cZ_e^{(4)} \otimes \cZ_e^{(4)}
          \otimes \Mat{\C}{2}\bigr)_{\mathrm{her}}
\]
is the Hermitian subspace of
Proposition~\ref{prop:DDN-separation}.
If \(B \in \mathrm{mconv}(\cP^{(4)})_2\), then there exists
\(X \in \mathcal{S}\) such that \(M_B + X \succeq 0\).

Writing \(\{Z_{a,b,j}\}\) for a fixed Hermitian basis of \(\mathcal{S}\),
every \(X \in \mathcal{S}\) can be written as
\[
X = \sum_{a,b,j} \xi_{a,b,j} Z_{a,b,j}.
\]
Thus, the condition \(M_B+X\succeq 0\) can be reformulated
in terms of coefficients \(\xi_{a,b,j}\) such that
\[
M_B + \sum_{a,b,j} \xi_{a,b,j} Z_{a,b,j} \succeq 0.
\]

The corresponding dual problem look for a positive semidefinite
matrix \(Y \succeq 0\) such that
\[
\Tr\!\bigl(Y Z_{a,b,j}\bigr) = 0
\quad\text{for all }a,b,j,
\]
and attempts to minimize \(\Tr(Y M_B)\).
If the optimal value is negative, then every admissible \(X\) satisfies
\[
\Tr\!\bigl(Y(M_B+X)\bigr)
 = \Tr(Y M_B) + \Tr(YX)
 = \Tr(Y M_B) < 0,
\]
which excludes the possibility that \(M_B + X\) is positive semidefinite.
Hence \(B \notin \mathrm{mconv}(\cP^{(4)})_2\).

We found \(Y\) by using the
CVXPY library in Python.  

This provides the construction of the matrix \(Y\) used in
Theorem~\ref{thm:C4-counterexample}.

\subsubsection{Explicit basis for \(\mathcal{Z}^{(4)}_e\)}
\label{sec:explicit-basis-Z4}

For completeness, we give an explicit Hermitian basis of the subspace
\[
\mathcal{Z}^{(4)}_e=\{Z\in \Mat{\C}{4}\mid\ \mathrm{diag}(Z)=0,\ Z\mathbf{1}=0,\ Z^{*}\mathbf{1}=0\},
\quad \mathbf{1}=(1,1,1,1)^{\mathsf{t}},
\]
which has dimension \(5\).
We chose the following matrices:
\[
z_1=\begin{pmatrix}
0&1&0&-1\\[2pt] 1&0&-1&0\\[2pt] 0&-1&0&1\\[2pt] -1&0&1&0
\end{pmatrix},\quad
z_2=\begin{pmatrix}
0&i&0&-i\\[2pt] -i&0&0&i\\[2pt] 0&0&0&0\\[2pt] i&-i&0&0
\end{pmatrix},\quad
z_3=\begin{pmatrix}
0&0&1&-1\\[2pt] 0&0&-1&1\\[2pt] 1&-1&0&0\\[2pt] -1&1&0&0
\end{pmatrix},
\]
\[
z_4=\begin{pmatrix}
0&0&i&-i\\[2pt] 0&0&0&0\\[2pt] -i&0&0&i\\[2pt] i&0&-i&0
\end{pmatrix},\quad
z_5=\begin{pmatrix}
0&0&0&0\\[2pt] 0&0&i&-i\\[2pt] 0&-i&0&i\\[2pt] 0&i&-i&0
\end{pmatrix}.
\]
For \(\Mat{\C}{2}\) we use the standard Hermitian basis
\[
s_1=\begin{pmatrix}1&0\\0&0\end{pmatrix},\quad
s_2=\begin{pmatrix}0&1\\1&0\end{pmatrix},\quad
s_3=\begin{pmatrix}0&-i\\ i&0\end{pmatrix},\quad
s_4=\begin{pmatrix}0&0\\0&1\end{pmatrix}.
\]
Writing \(Y_{i,j,k}:=z_i\otimes z_j\otimes s_k\), we can describe an explicit Hermitian basis for
\[
\mathcal{S}=\big(\mathcal{Z}^{(4)}_e\otimes \mathcal{Z}^{(4)}_e\otimes \Mat{\C}{2}\big)_{\mathrm{her}}
\]
by setting
\[
C_{i,j,k}:=Y_{i,j,k}+Y_{i,j,k}^{*},\qquad
D_{i,j,k}:=i\big(Y_{i,j,k}-Y_{i,j,k}^{*}\big),
\]
so that \(\mathcal{S}\) is spanned by arbitrary real linear combinations of \(C_{i,j,k}\) and \(D_{i,j,k}\).
(Any other basis of \(\mathcal{Z}^{(4)}_e\) would work as well.)

\bigskip
\noindent
As an example, we now compute the dimension of the commutant of the adjacency matrix
for cyclic graphs \(C_n\).

\subsection{Explicit computation of the commutant dimension for cycle graphs}
\label{sec:commutant-dimension}

In Section~\ref{sec:GQMS_construction} (specifically Theorem~\ref{thm:kregular_GQMS_freespec}), we showed that the number of independent Hermitian parameters for a \(k\)-regular graph \(\Gamma\) is given by \(d_\Gamma - N\), where \(d_\Gamma = \dim(\operatorname{Comm}(A_\Gamma))\) and \(N\) is the number of connected components.

Here, we compute this dimension explicitly for the cycle graphs \(C_n\), which corresponds to the case of the counterexample discussed in Section~\ref{sec:counterexample}.

\subsubsection{Cycle graphs \(C_n\)}

Let \(d_n=\dim\{X\in \operatorname{Mat}_n(\C)\mid (A_{C_n}\otimes I_s)X=X(A_{C_n}\otimes I_s)\}\) for \(C_n\), and recall that \(C_n\) is connected and \(2\)-regular.

\begin{proposition}
Let \(A_{C_n}\in \operatorname{Mat}_n(\R)\) be the adjacency matrix of \(C_n\) and set
\[
\operatorname{Comm}(A_{C_n}) := \{ X \in \operatorname{Mat}_n(\C) \mid (A_{C_n}\otimes I_s)X = X(A_{C_n}\otimes I_s) \},
\qquad
d_n := \dim\operatorname{Comm}(A_{C_n}).
\]
Then
\[
d_n =
\begin{cases}
2n - 1, & \text{\(n\) odd},\\[1mm]
2n - 2, & \text{\(n\) even}.
\end{cases}
\]
\end{proposition}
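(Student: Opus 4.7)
The plan is to reduce the problem to a pure multiplicity count for the eigenvalues of $A_{C_n}$. For any diagonalizable matrix $A \in \operatorname{Mat}_n(\C)$ with distinct eigenvalues $\lambda_1, \ldots, \lambda_r$ of multiplicities $m_1, \ldots, m_r$, a standard linear-algebra fact is $\dim\{X \in \operatorname{Mat}_n(\C) : AX = XA\} = \sum_{i=1}^r m_i^2$. Since $A_{C_n}$ is a real symmetric circulant it is diagonalizable, so this formula applies directly.

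First I would write down the spectrum explicitly: the eigenvectors of $A_{C_n}$ are the discrete Fourier basis vectors $v_k = n^{-1/2}(1, \omega^k, \omega^{2k}, \ldots, \omega^{(n-1)k})^{\top}$, $k = 0, 1, \ldots, n-1$, with $\omega = e^{2\pi i/n}$, and $A_{C_n} v_k = \lambda_k v_k$ where $\lambda_k = 2 \cos(2\pi k/n)$. Next I would analyze the map $k \mapsto \lambda_k$. The identity $\cos(2\pi k/n) = \cos(2\pi(n-k)/n)$ shows that $\lambda_k = \lambda_{n-k}$, so eigenvalues coincide in pairs under the involution $k \mapsto n-k \pmod n$. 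The fixed points of this involution on $\{0, 1, \ldots, n-1\}$ are exactly $k=0$ (giving $\lambda = 2$) and, when $n$ is even, $k = n/2$ (giving $\lambda = -2$). Within the fundamental domain $\{0, 1, \ldots, \lfloor n/2 \rfloor\}$ the values $\cos(2\pi k/n)$ are pairwise distinct by the injectivity of $\cos$ on $[0,\pi]$, so there are no accidental coincidences. Hence if $n$ is odd there is one simple eigenvalue and $(n-1)/2$ eigenvalues of multiplicity two, while if $n$ is even there are two simple eigenvalues and $(n-2)/2$ eigenvalues of multiplicity two.

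The final step is to substitute into $\sum m_i^2$. For $n$ odd I get $d_n = 1 + 4 \cdot \tfrac{n-1}{2} = 2n-1$, and for $n$ even I get $d_n = 1 + 1 + 4 \cdot \tfrac{n-2}{2} = 2n-2$, matching the claim. The argument is essentially routine once the spectral picture of the circulant is set up; the only point worth double-checking is that no further eigenvalue coincidences arise outside the reflection symmetry, which is where the parity-dependent case split enters. I do not expect any genuine obstacle beyond keeping the two parity cases cleanly separated.
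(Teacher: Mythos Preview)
Your proposal is correct and follows essentially the same approach as the paper: both arguments use the formula $\dim\operatorname{Comm}(A)=\sum_\lambda m_\lambda^2$ for a diagonalizable matrix, compute the eigenvalues $\lambda_k=2\cos(2\pi k/n)$ of the cycle, and split into the odd/even cases according to the fixed points of $k\mapsto n-k$. If anything, your version is slightly more careful in explicitly justifying (via injectivity of $\cos$ on $[0,\pi]$) that no additional eigenvalue coincidences occur beyond the reflection pairing, a point the paper leaves implicit.
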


\begin{proof}
Since \(A_{C_n}\) is real symmetric, it is diagonalizable; the dimension of its
commutant in \(\operatorname{Mat}_n(\C)\) is \(\sum_{\lambda} m_\lambda^2\), where \(m_\lambda\) is
the multiplicity of the eigenvalue \(\lambda\).

The eigenvalues of \(A_{C_n}\) are
\(\lambda_k = 2\cos\!\bigl(\tfrac{2\pi k}{n}\bigr)\), \(k=0,\dots,n-1\).

For \(n\) odd, \(\lambda_0=2\) has multiplicity \(1\). The other indices
appear in pairs \((k,n-k)\) with \(1\le k\le (n-1)/2\), each of them with multiplicity \(2\).
Thus
\[
d_n = 1^2 + \frac{n-1}{2}\cdot 2^2 = 2n-1.
\]
For \(n\) even, \(\lambda_0=2\) and \(\lambda_{n/2}=-2\) have multiplicity \(1\). The remaining indices
pair as \((k,n-k)\) with \(1\le k\le n/2-1\), each with multiplicity \(2\).
Hence
\[
d_n = 1^2 + 1^2 + \Bigl(\frac{n}{2}-1\Bigr)\cdot 2^2
= 2 + 2(n-2) = 2n-2.
\]
\end{proof}

\begin{corollary}
For a QMS commuting with \(A_{C_n}\), the number of independent
Hermitian block variables is \(d_n - 1\), i.e.,
\[
\#\text{ independent parameters } =
\begin{cases}
2n - 2, & \text{\(n\) odd},\\[1mm]
2n - 3, & \text{\(n\) even}.
\end{cases}
\]
\end{corollary}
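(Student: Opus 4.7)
The plan is to combine the two structural facts already at our disposal: the general parameter count from Theorem~\ref{thm:kregular_GQMS_freespec} for $k$-regular graphs, and the explicit value of $d_n = \dim \operatorname{Comm}(A_{C_n})$ computed in the preceding proposition. Since $C_n$ is a $2$-regular graph (hence the theorem applies) and is connected (so the number of components is $N=1$), the number of independent Hermitian parameters in the affine parametrization \eqref{eq:GQMS_affine_parametrization} is exactly $d_n - N = d_n - 1$.

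Concretely, I would proceed in three short steps. First, I would invoke Theorem~\ref{thm:kregular_GQMS_freespec} (and the discussion immediately following it, in particular Proposition~\ref{prop:regular_sums}) to recall that solving the commutation constraints yields $d_\Gamma$ independent Hermitian parameters, and that imposing the magic relations removes exactly one parameter per connected component, producing $d_\Gamma - N$ independent variables. Second, I would specialize this to $\Gamma = C_n$: cycle graphs are connected, so $N = 1$, and $2$-regular, so the theorem applies. Third, I would substitute the values $d_n = 2n-1$ for odd $n$ and $d_n = 2n-2$ for even $n$ from the previous proposition, giving $2n-2$ and $2n-3$ independent Hermitian block variables respectively.

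There is essentially no obstacle here: both ingredients are already proved, and the corollary is a direct arithmetical consequence. The only mild subtlety worth spelling out is verifying that in the connected case the single magic constraint is effectively imposed and truly independent of the commutation constraints, so that the count $d_n - 1$ is sharp rather than merely an upper bound. This, however, follows from Proposition~\ref{prop:regular_sums}, which shows that for a matrix in the commutant of $A_{C_n}$ the row sums (and separately the column sums) are forced to be constant, so the magic normalization imposes exactly one extra linear condition on the commutant, namely fixing that common value to $I_s$.
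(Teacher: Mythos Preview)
Your proposal is correct and follows exactly the approach the paper takes: the paper states the corollary without a formal proof and then explains in the subsequent Remark that the subtraction of $1$ accounts for the single connected component ($N=1$), which is precisely your argument combining the general count $d_\Gamma - N$ from Section~\ref{sec:GQMS_construction} with the values of $d_n$ from the preceding proposition.
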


\begin{remark}
The subtraction of \(1\) takes into account the single component (\(N=1\)) and the
resulting componentwise doubly stochastic constraint (row/column sums equal to \(I_s\)).
In the even case, the index pairing \((k,n-k)\) is already reflected in the
multiplicity pattern that produces \(d_n\).
\end{remark}

\section*{Acknowledgements}

I warmly thank Makoto Yamashita for his continuous guidance, for many helpful discussions,
and for his supervision throughout this project.
I am also grateful to  Inga  Valentiner-Branth, and Tim Netzer\ for their clarifications regarding
some technical aspects of their work, and for their kind replies to my questions.
I would also like to thank Tea Štrekelj for carefully reading an earlier version of the manuscript
and for her helpful comments, in particular on aspects related to matrix convexity
and free spectrahedra.

This research was funded by The Research Council of Norway [project 324944].
\section*{Data Availability}

The author declare that the data supporting findings of this study are available within the paper and its supplementary information files.

\begin{bibdiv}
\begin{biblist}

\bib{BB07}{article}{
   author={Banica, Teodor},
   author={Bichon, Julien},
   title={Quantum automorphism groups of vertex-transitive graphs of order
   \(\leq11\)},
   journal={J. Algebraic Combin.},
   volume={26},
   date={2007},
   number={1},
   pages={83-105},
   issn={0925-9899},
   review={\MR{2335703}},
   doi={10.1007/s10801-006-0049-9},
}
\bib{Ban05}{article}{
   author={Banica, Teodor},
   title={Quantum automorphism groups of homogeneous graphs},
   journal={J. Funct. Anal.},
   volume={224},
   date={2005},
   number={2},
   pages={243--280},
   issn={0022-1236},
   review={\MR{2146039}},
   doi={10.1016/j.jfa.2004.11.002},
}

\bib{BNS23}{article}{
   author={Bluhm, Andreas},
   author={Nechita, Ion},
   author={Schmidt, Simon},
   title={Polytope compatibility---from quantum measurements to magic
   squares},
   journal={J. Math. Phys.},
   volume={64},
   date={2023},
   number={12},
   pages={Paper No. 122201, 33},
   issn={0022-2488},
   review={\MR{4680468}},
   doi={10.1063/5.0165424},
}

\bib{DDN20}{article}{
   author={De las Cuevas, Gemma},
   author={Drescher, Tom},
   author={Netzer, Tim},
   title={Quantum magic squares: dilations and their limitations},
   journal={J. Math. Phys.},
   volume={61},
   date={2020},
   number={11},
   pages={111704, 15},
   issn={0022-2488},
   review={\MR{4174414}},
   doi={10.1063/5.0022344},
}

\bib{DNV23}{article}{
   author={De las Cuevas, Gemma},
   author={Netzer, Tim},
   author={Valentiner-Branth, Inga},
   title={Magic squares: Latin, semiclassical, and quantum},
   journal={J. Math. Phys.},
   volume={64},
   date={2023},
   number={2},
   pages={Paper No. 022201, 12},
   issn={0022-2488},
   review={\MR{4542368}},
   doi={10.1063/5.0127393},
}

\bib{EH19}{article}{
   author={Evert, Eric},
   author={Helton, J. William},
   title={Arveson extreme points span free spectrahedra},
   journal={Math. Ann.},
   volume={375},
   date={2019},
   number={1-2},
   pages={629-653},
   issn={0025-5831},
   review={\MR{4000252}},
   doi={10.1007/s00208-019-01858-9},
}

\bib{Sch18}{article}{
   author={Schmidt, Simon},
   title={The Petersen graph has no quantum symmetry},
   journal={Bull. Lond. Math. Soc.},
   volume={50},
   date={2018},
   number={3},
   pages={395-400},
   issn={0024-6093},
   review={\MR{3829728}},
   doi={10.1112/blms.12154},
}

\bib{Wan95}{article}{
   author={Wang, Shuzhou},
   title={Free products of compact quantum groups},
   journal={Comm. Math. Phys.},
   volume={167},
   date={1995},
   number={3},
   pages={671--692},
   issn={0010-3616},
   review={\MR{1316765}},
}

\bib{Web23}{article}{
   author={Weber, Moritz},
   title={Quantum permutation matrices},
   journal={Complex Anal. Oper. Theory},
   volume={17},
   date={2023},
   number={3},
   pages={Paper No. 37, 26},
   issn={1661-8254},
   review={\MR{4564553}},
   doi={10.1007/s11785-023-01335-x},
}

\end{biblist}
\end{bibdiv}

\end{document}